\newtheorem{notation}{Notation}
\title{Hardness and algorithmic results for the approximate cover problem}
\author{Alexandru Popa\inst{1,2} \and Andrei Tanasescu\inst{3}}
\institute{University of Bucharest \and
National Institute of Research and Development in Informatics \and Politehnica University of Bucharest \\
E-mail: \email{alexandru.popa@fmi.unibuc.ro, andrei.tanasescu@mail.ru}}
\begin{document}
\maketitle

\begin{abstract}
In CPM 2017, Amir et al. introduce a problem, named \emph{approximate string cover} (\textbf{ACP}), motivated by many aplications including coding and automata theory, formal language theory, combinatorics and molecular biology.  A \emph{cover} of a string $T$ is a string $C$ for which every letter of $T$ lies within some occurrence of $C$. The input of the \textbf{ACP} problem consists of a string $T$ and an integer $m$ (less than the length of $T$), and the goal is to find a string $C$ of length $m$ that covers a string $T'$ which is as close to $T$ as possible (under some predefined distance). Amir et al. study the problem for the Hamming distance.

In this paper we continue the work of Amir et al. and show the following results:

\begin{enumerate}
\item We show an approximation algorithm for the \textbf{ACP} with an approximation ratio of $\sqrt{OPT}$, where OPT is the size of the optimal solution.
\item We provide an FPT algorithm with respect to the alphabet size.
\item The \textbf{ACP} problem naturally extends to pseudometrics. Moreover, we show that for some family of pseudometrics, that we term \emph{homogenous additive pseudometrics}, the complexity of \textbf{ACP} remains unchanged.
\item We partially give an answer to an open problem of Amir et al. and show that the Hamming distance over an unbounded alphabet is equivalent to an extended metric over a fixed sized alphabet.
\end{enumerate}
\end{abstract}

\section{Introduction}

\paragraph*{\sc Motivation.}

Redundancy is a common trait of all natural data and was intensely studied over the years for its descriptive capabilities~\cite{ming1990kolmogorov,muchnik2003almost}. Errors can occur at any point in the data manipulation process, but by the use of redundancy they may be detected and, perhaps, corrected before propagation. 

Consider the transmission of a message over a radio frequency. Since we transmit over radio, we must use a digital to analog converter, that modulates our signal in amplitude and/or phase. In our example, we consider $amplitude\ shift\ keying$ (see, e.g.,~\cite{middlestead2017digital}). At the other end, the signal must be converted back, but we must check for transmission errors. If the channel is not too noisy, we may round the received amplitude to the spectrum we are using. However, we must be able to at least tell when it is too noisy. Since the signal we sent is smooth and periodic, we may  smooth our data and identify interference as unnatural spikes in the received input. This, however, only accounts for major interferences, and we cannot possibly do more at the physical level, since we may not assume smoothness of the sent data itself and must rely instead on the redundancy at some higher data, that is no longer agnostic to the message's form. 


Periodicity is a very important phenomenon when analyzing physical data such as an analogue signal. In general, natural data is very redundant or repetitive and exhibits some key patterns or regularities~\cite{HAVLIN1995171,timmermans2017cyclical,tychonoff1935theoremes}. Periodicity itself has been thoroughly studied in various fields such as Signal Processing~\cite{sethares1999periodicity}, Bioinformatics~\cite{brodzik2007quaternionic}, Dynamical Systems~\cite{katok1997introduction} and Control Theory~\cite{bacciotti2006liapunov}, each bringing its own insights. 

However some phenomena are  not periodical by nature, even if they are very redundant. Consider for instance the string $abaabaababa$: even though it is not periodic it clearly exhibits a single pattern, $aba$, and thus, we shall call it $quasi-periodic$ (see~\cite{ApostolicoB97}). Depending on the specific perturbations this may or may not be adequate. For example, $abaabaababa$ could be a repeated $aba$ that suffers from two $a$s so close together that they fuse (or some other desynchronization), as sounds sometimes do in natural language. In fact even $abaabaabcba$ and $abaabaabaaca$ exhibit the pattern $aba$ and the nonconforming $c$ could result from some echo or corruption. Depending on the task at hand we may want to retrieve either the information ($aba$) or the peculiarities in its transmission (the non-periodicity). 

\vspace*{-1cm}
\begin{figure}
\noindent\begin{minipage}{\textwidth}
\begin{minipage}[c][3cm][c]{\dimexpr0.5\textwidth-5pt\relax}
\resizebox{\textwidth}{!}{
\begin{tikzpicture}
	\draw[thick, black] (-1,0) -- (0,0);
    \draw[thick, black] (12.3,0) -- (13.3,0);

    \draw[thick, black] (0,0) sin (0.25,2) cos (0.5,0) sin (0.75,-2) cos (1,0) -- (1.1,0);
    \draw[thick, black] (1.1,0) sin (1.35,1) cos (1.6,0) sin (1.85,-1) cos (2.1,0) -- (2.2,0);
    \draw[thick, black] (2.2,0) sin (2.45,2) cos (2.7,0) sin (2.95,-2) cos (3.2,0) -- (3.4,0);
    
    \draw[thick, black] (3.4,0) sin (3.65,2) cos (3.9,0) sin (4.15,-2) cos (4.4,0) -- (4.5,0);
    \draw[thick, black] (4.5,0) sin (4.75,1) cos (5.0,0) sin (5.25,-1) cos (5.5,0) -- (5.6,0);
    \draw[thick, black] (5.6,0) sin (5.85,2) cos (6.1,0) sin (6.35,-2) cos (6.6,0) -- (6.8,0);
    
    \draw[thick, black] (6.8,0) sin (7.05,2) cos (7.3,0) sin (7.55,-2) cos (7.8,0) -- (7.9,0);
    \draw[thick, black] (7.9,0) sin (8.15,1) cos (8.4,0) sin (8.65,-1) cos (8.9,0) -- (9.0,0);
    \draw[thick, black] (9.0,0) sin (9.25,2) cos (9.5,0) sin (9.75,-2) cos (10.0,0) -- (10.1,0);
    
    \draw[thick, dashed, black] (9.0,0) -- (9.1,0) sin (9.35,2) cos (9.6,0) sin (9.85,-2) cos (10.1,0) -- (10.2,0);
    \draw[thick, black] (10.2,0) sin (10.45,1) cos (10.7,0) sin (10.95,-1) cos (11.2,0) -- (11.3,0);
    \draw[thick, black] (11.3,0) sin (11.55,2) cos (11.8,0) sin (12.05,-2) cos (12.3,0);

	\draw (0.5,2) node [anchor=south] {$a$};
	\draw (1.6,2) node [anchor=south] {$b$};
	\draw (2.7,2) node [anchor=south] {$a$};
    
	\draw (3.9,2) node [anchor=south] {$a$};
	\draw (5.0,2) node [anchor=south] {$b$};
	\draw (6.1,2) node [anchor=south] {$a$};
    
	\draw (7.3,2) node [anchor=south] {$a$};
	\draw (8.4,2) node [anchor=south] {$b$};
	\draw (9.5,2) node [anchor=south] {$a$};
    
	\draw (10.7,2) node [anchor=south] {$b$};
	\draw (11.8,2) node [anchor=south] {$a$};

\end{tikzpicture}}
\end{minipage}\hfill
\begin{minipage}[c][3cm][c]{\dimexpr0.5\textwidth-5pt\relax}
\resizebox{\textwidth}{!}{
\begin{tikzpicture}
	\draw[thick, black] (-1,0) -- (0,0);
    \draw[thick, black] (12.3,0) -- (13.3,0);

    \draw[thick, black] (0,0) sin (0.25,2) cos (0.5,0) sin (0.75,-2) cos (1,0) -- (1.1,0);
    \draw[thick, black] (1.1,0) sin (1.35,1) cos (1.6,0) sin (1.85,-1) cos (2.1,0) -- (2.2,0);
    \draw[thick, black] (2.2,0) sin (2.45,2) cos (2.7,0) sin (2.95,-2) cos (3.2,0) -- (3.4,0);
    
    \draw[thick, black] (3.4,0) sin (3.65,2) cos (3.9,0) sin (4.15,-2) cos (4.4,0) -- (4.5,0);
    \draw[thick, black] (4.5,0) sin (4.75,1) cos (5.0,0) sin (5.25,-1) cos (5.5,0) -- (5.6,0);
    \draw[thick, black] (5.6,0) sin (5.85,2) cos (6.1,0) sin (6.35,-2) cos (6.6,0) -- (6.8,0);
    
    \draw[thick, black] (6.8,0) sin (7.05,2) cos (7.3,0) sin (7.55,-2) cos (7.8,0) -- (7.9,0);
    \draw[thick, black] (7.9,0) sin (8.15,1) cos (8.4,0) sin (8.65,-1) cos (8.9,0) -- (9.0,0);
    \draw[thick, black] (9.0,0) sin (9.3,3.9) cos (9.55,0) sin (9.8,-3.9) cos (10.1,0) -- (10.2,0);
    \draw[thick, dashed, blue] (9.0,0) sin (9.25,2) cos (9.5,0) sin (9.75,-2) cos (10.0,0) -- (10.2,0);
    
    \draw[thick, dashed, red] (9.0,0) -- (9.1,0) sin (9.35,2) cos (9.6,0) sin (9.85,-2) cos (10.1,0) -- (10.2,0);
    \draw[thick, black] (10.2,0) sin (10.45,1) cos (10.7,0) sin (10.95,-1) cos (11.2,0) -- (11.3,0);
    \draw[thick, black] (11.3,0) sin (11.55,2) cos (11.8,0) sin (12.05,-2) cos (12.3,0);

	\draw (0.5,2) node [anchor=south] {$a$};
	\draw (1.6,2) node [anchor=south] {$b$};
	\draw (2.7,2) node [anchor=south] {$a$};
    
	\draw (3.9,2) node [anchor=south] {$a$};
	\draw (5.0,2) node [anchor=south] {$b$};
	\draw (6.1,2) node [anchor=south] {$a$};
    
	\draw (7.3,2) node [anchor=south] {$a$};
	\draw (8.4,2) node [anchor=south] {$b$};
	\draw (9.5,4) node [anchor=south] {$c$};
    
	\draw (10.7,2) node [anchor=south] {$b$};
	\draw (11.8,2) node [anchor=south] {$a$};

\end{tikzpicture}}
\end{minipage}

\begin{minipage}[c][4em][t]{\dimexpr0.5\textwidth-5pt\relax}
\captionof{figure}{The string $aba$ sent repeatedly over a channel as an ASK signal, with a desynchronization moment}
\end{minipage}\hfill
\begin{minipage}[c][4em][t]{\dimexpr0.5\textwidth-5pt\relax}
\captionof{figure}{The string $aba$ sent repeatedly over a channel as an ASK signal, with an echo}
\end{minipage}\hfill
\end{minipage}
\end{figure}
For example, in signal processing we may confidently rely upon periodicity, since we induce it ourselves and have an environment upon which we may make some assumptions. However, when trying to decode information which was not encoded by us, we may not expect to find periodicity. Even when the information was imbued with periodicity, if the environment exerts a degrading force, a posteriori it is entirely possible that it is no longer be periodic. If however it is not too degraded, it still holds faithful to its original form and hence exhibit quasi-periodicity. Note that the incurred perturbations may be inevitable in the typical usage environment, especially for industrial uses~\cite{georgiev1993digital}.



\paragraph*{\sc Related work.}

Quasi-periodicity was introduced by Ehrenfeucht in 1990 (according to ~\cite{ApostolicoB97}) in a Tech Report for Purdue University, even though in was not published in Elsevier until 1993~\cite{apostolico1993efficient}. Apostolico, Farach and Iliopoulos were the first to consider quasi-periodicity in computer science~\cite{ApostolicoFI91}. They define the quasi-period of a string to be the length of its shortest cover and present a linear (time and space) algorithm for computing it~\cite{ApostolicoFI91}. This notion attracted the attention of numerous researchers ~\cite{breslauer1992line,breslauer1994testing,li2002computing,moore1994optimal,moore1995correction}. The following surveys summarize the first decade of results:~\cite{apostolico1997periods,kociumaka2015fast,kolpakov2003finding}.

However, quasi-periodicity takes many forms, depending on the type of patterns we want to recover. Further work has been concerned with different variants such as seeds~\cite{guo2006computing}, the maximum quasi-periodic substring~\cite{pedersen2000finding}, k-covers~\cite{cole2005complexity}, $\lambda$-covers~\cite{guo2006computing}, enhanced covers~\cite{flouri2013enhanced}, partial covers~\cite{kociumaka2015fast}. Another variation point is the context, e.g. indeterminate strings~\cite{antoniou2008conservative} or weighted sequences~\cite{christodoulakis2006computation}. Some of the related problems are $\mathcal{NP}$-hard. 

For some applications, such as molecular biology and computer-assisted musical analysis, we need a weaker definition of quasi-periodicity. Thus, quasi-periodicity takes the form of approximate repetitions. We may define an approximatively repeating pattern as a substring whose occurrences leave very few gaps, or that all repetitions are near an ``original'' source. Landau and Schmidt study first this form of quasi-periodicity and focus on approximate tandem repeats~\cite{landau2001algorithm}.

In this paper we elaborate on the work of Amir et al. ~\cite{AmirLLLP17,AmirLLP17} who introduce \emph{approximate string covers}. 

Let $w$ be a string over the alphabet $\Sigma$. We say that $w$ is periodic if it is a succession of repetitions of some proper substring $p$ of it that do not overlap i.e. $w = p^n$, for some $n\in\mathbb{N}^*$. Note that for a given $w$ there may be multiple candidates. For example, $abaabaabaaba$ can be written as both $\left(abaaba\right)^2$ or $\left(aba\right)^4$. The period of a string $w$ is the shortest candidate string $p$. For instance, the period of $abaabaabaaba$ is $aba$. 


Let $w$ be a string over the alphabet $\Sigma$. We call $p$ a cover of $w$ if $p$ is shorter than $w$ and any character of $w$ belongs to some occurrence of $p$ in $w$. Equivalently, $w$ is covered by $p$ if $w$ is a succession of repetitions of $p$ that may or may not overlap. Note that a periodic string is always covered by its cover and any multiple of it and hence a string may admit multiple covers. As is the case for periods, we are only interested in the shortest cover. For instance the shortest cover of $abaabaabaaba$ is $aba$. 


Determining the shortest cover of a given string $w$ is called the Minimal \textbf{S}tring \textbf{C}over \textbf{P}roblem (\textbf{SCP} for short) and is solvable in linear time~\cite{ApostolicoFI91}. 

 Let $w$ be a string over the alphabet $\Sigma$. We call $p$ an \emph{approximate cover} of $w$, if $p$ is a cover of an ``approximation'' $w^\prime$ of $w$. The approximation error is the distance between $w$ and $w^\prime$ with respect to some metric. By abuse of notation we say that $p$ is the approximate string cover of $w$ if it is the shortest cover of the closest approximation $w^\prime$ of $w$ that admits a cover. Note that if $w$ admits a cover then its approximate string cover is its own shortest cover and the approximation is zero with regard to any metric. For example the approximate cover of $abaabaababa$ is $aba$.


Determining the approximate cover of a given string $w$ is called the \textbf{A}pproximate String \textbf{C}over \textbf{P}roblem (\textbf{ACP} for short). Amir et al. prove that \textbf{ACP} is NP-hard with respect to the Hamming distance~\cite{AmirLLP17}. 

Let $w$ be a string over the alphabet $\Sigma$. We call $p$ a seed of $w$ if $\lvert p\rvert < \lvert w\rvert$ and there exists a super-string $w^\prime$ of $w$ such that $p$ is a cover of $w^\prime$. When the error tolerance is small, with a small degree of incertitude we can find in polynomial time\cite{AmirLLLP17} a small set of candidates containing either the approximate cover of $w$, $p$, or a seed of $p$.

\paragraph*{\sc Our results}
In this paper we follow up on the work of Amir et al. ~\cite{AmirLLP17,AmirLLLP17} and investigate the \textbf{ACP}. 
In Section~\ref{sec:preliminaries} we introduce notation and we define formally the \textbf{ACP} problem. In Section~\ref{sec:approx} we present a polynomial approximation algorithm for \textbf{ACP} that returns an approximate cover that matches at least $\Omega\left(\sqrt{n}\right)$ characters of the given input $w$, where $n$ is the number of characters matched by the (best) approximate cover of $w$. Then, in Section~\ref{sec:fpt} we design a fixed-parameter (probabilisitic) algorithm for solving the \textbf{ACP} for (super)addtivie metrics---the (super)aditive metrics are also defined in Section~\ref{sec:fpt}.  

In Section~\ref{sec:pseudo} we show that \textbf{ACP} naturally extends to pseudometrics and that for a family of pseudometrics, which we call \emph{homogenous additive} the complexity of \textbf{ACP} remains unchanged. Finally, in Section~\ref{sec:block} we show that the Hamming distance over any unbounded alphabet is equivalent to an extended metric on any fixed size alphabet. We call this metric a \emph{block variation} of the Hamming distance. We prove that \textbf{ACP} is just as hard with regard to an additive (extended) (pseudo)metric as it is with regard to its block variation. Due to space constraints some proofs are placed in the appendix.

\section{Preliminaries}
\label{sec:preliminaries}

For \({m\leq n}\in\mathbb{N}\), let \(\overline{{m,\,n}}=\lbrace {m,\,m+1,\,\dots,\,n}\rbrace\).
For two symbols \({x,\,y\in X}\) let \(\delta_{x,\,y}\) be the Kronecker delta, i.e. $\delta_{x,\,y}=1$ if ${x=y}$ or $0$ if ${x\neq y}$. For a string $w$ and a character $c \in Sigma$, let $freq_w(c)$ be the number of occurences of $c$ in $w$.

\begin{definition}[tilings]
We define a tiling of size n to be a pair \(\left(\mathcal{I},\,i\right)\) where \(\mathcal{I}\subseteq\overline{1,\,{n}}\) and \(i:\overline{1,\,\lvert\mathcal{I}\rvert}\rightarrow\mathcal{I}\) such that:
\begin{itemize}
\item \(\forall {i}\in\overline{1,\,\lvert\mathcal{I}\rvert}\ \mathcal{I}_{i}= i\left({i}\right)\)
\item \(\mathcal{I}_1=1,\,\mathcal{I}_{last}=\mathcal{I}_{\lvert\mathcal{I}\rvert},\,\lVert\mathcal{I}\rVert = n+1-\mathcal{I}_{last}\)
\item \(\forall {i}\in\overline{1,\,\lvert\mathcal{I}\rvert -1}\ 0\leq\mathcal{I}_{i+1}-\mathcal{I}_{i}\leq\lVert\mathcal{I}\rVert\)
\end{itemize}

\end{definition}

We consider \(\mathcal{T}_{n}\) to be the set of size-n tilings and \(i\) will generally be omitted, being deferred to the subscript notation above, where additionally we write \(\mathcal{I}_\textmd{last}=\mathcal{I}_{\lvert\mathcal{I}\rvert}\). 

\begin{example}
For instance \(\mathcal{I}=\lbrace1,\,4,\,6\rbrace\in\mathcal{T}_8\) is a tiling with \(\lVert\mathcal{I}\rVert=3\). For an illustration see Figure~\ref{fig:example_tiling}.
\end{example}

\begin{figure}
\begin{center}\begin{tikzpicture}
\draw[step=1cm,gray,very thin] (0,1) grid (8,2);
\foreach \x in {1,2,3,4,5,6,7}
	\node [anchor=south] at (\x cm-0.5cm,2) {$\x$};
\node [anchor=south] at (8cm, 2) {${n}=8$};
\draw [thick, dashed, gray] (0 cm, 1) rectangle (3cm, 2);
\draw [thick, dashed, gray] (3 cm, 1) rectangle (6cm, 2);
\draw [thick, dashed] (5 cm, 1) rectangle (8cm, 2);
\draw [thick, red] (5 cm,1) -- (8 cm,1);
\node [anchor=south, color=red] at (6.5cm, 1) {$\lVert\mathcal{I}\rVert=3$};
\node [anchor=north] at (0.5, 1) {$\mathcal{I}_1$};
\node [anchor=north] at (3.5, 1) {$\mathcal{I}_2$};
\node [anchor=north] at (5.5, 1) {$\mathcal{I}_\textmd{last}$};
\end{tikzpicture}\end{center}
\caption{Example of a tiling}
\label{fig:example_tiling}
\end{figure}
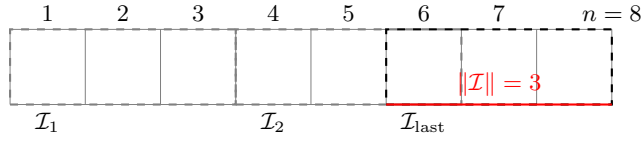

\begin{definition} [valid tilings]
Let \(\mathcal{I}\in\mathcal{T}_n\) be a tiling. We say that a word \(w\in\Sigma^*\) over a given alphabet leads to a valid tiling \(\left(w,\,\mathcal{I}\right)\) iff \(\lvert w\rvert=\lVert\mathcal{I}\rVert\) and \(\forall i\in\overline{1,\,\lvert\mathcal{I}\rvert-1},\,j \in\overline{1,\,\mathcal{I}_i-\mathcal{I}_{i+1}+\lVert\mathcal{I}\rVert-1}\ {w}_{j}={w}_{\mathcal{I}_{i+1}-\mathcal{I}_{i}+{j}}\) i.e. there are no conflicts in the tiling itself. In this case we say that w is a cover of \(\mathcal{I}\left({w}\right)\in\Sigma^{n}\) where \(\forall {i}\in\overline{1,\,\lvert\mathcal{I}\rvert},\, {j}\in\overline{1,\,\lVert\mathcal{I}\rVert}\ \mathcal{I}\left( {w}\right)_{\mathcal{I}_{i}-1+{j}}={w}_{j}\).
\end{definition}

\begin{example}
For the tiling in Figure~\ref{fig:example_tiling} we want to find a word \(w\in\Sigma^3\) such that \(w_1=w_3\).

For instance, if \(w=aba\) we obtain \(\mathcal{I}\left(w\right)=abaababa\), but \({w}^\prime=abb\) does not lead to a valid tiling. For an illustration see Figure~\ref{fig:valid-tiling}.
\end{example}

\vspace*{-0.5cm}
\begin{figure}
\begin{center}\begin{tikzpicture}
\draw[step=1cm,gray,very thin] (0,1) grid (8,2);
\foreach \x in {1,2,3,4,5,6,7}
	\node [anchor=south] at (\x cm-0.5cm,2) {$\x$};
\node [anchor=south] at (8cm, 2) {${n}=8$};
\draw [thick, dashed, gray] (0 cm, 1) rectangle (3cm, 2);
\draw [thick, dashed, gray] (3 cm, 1) rectangle (6cm, 2);
\draw [thick, dashed] (5 cm, 1) rectangle (8cm, 2);
\draw [thick, red] (5 cm,1) -- (8 cm,1);
\node [anchor=north] at (0.5, 1) {$\mathcal{I}_1$};
\node [anchor=north] at (3.5, 1) {$\mathcal{I}_2$};
\node [anchor=north] at (5.5, 1) {$\mathcal{I}_\textmd{last}$};
\node [color=blue] at (0.5, 1.5) {a};
\node [color=blue] at (1.5, 1.5) {b};
\node [color=blue] at (2.5, 1.5) {a};
\node [color=red] at (3.5, 1.5) {a};
\node [color=red] at (4.5, 1.5) {b};
\node [color=red] at (5.45, 1.55) {a};
\node [color=green] at (5.55, 1.45) {a};
\node [color=green] at (6.5, 1.5) {b};
\node [color=green] at (7.5, 1.5) {a};
\draw [->, thick, color=green] (6.5, 0.5) -- (5.7, 1.2);
\node [color=green, thick, anchor=west] at (6.5, 0.5) {match};
\end{tikzpicture}\end{center}

\begin{center}\begin{tikzpicture}
\draw[step=1cm,gray,very thin] (0,1) grid (8,2);
\foreach \x in {1,2,3,4,5,6,7}
	\node [anchor=south] at (\x cm-0.5cm,2) {$\x$};
\node [anchor=south] at (8cm, 2) {${n}=8$};
\draw [thick, dashed, gray] (0 cm, 1) rectangle (3cm, 2);
\draw [thick, dashed, gray] (3 cm, 1) rectangle (6cm, 2);
\draw [thick, dashed] (5 cm, 1) rectangle (8cm, 2);
\draw [thick, red] (5 cm,1) -- (8 cm,1);
\node [anchor=north] at (0.5, 1) {$\mathcal{I}_1$};
\node [anchor=north] at (3.5, 1) {$\mathcal{I}_2$};
\node [anchor=north] at (5.5, 1) {$\mathcal{I}_\textmd{last}$};
\node [color=blue] at (0.5, 1.5) {a};
\node [color=blue] at (1.5, 1.5) {b};
\node [color=blue] at (2.5, 1.5) {b};
\node [color=red] at (3.5, 1.5) {a};
\node [color=red] at (4.5, 1.5) {b};
\node [color=red] at (5.45, 1.55) {b};
\node [color=green] at (5.55, 1.45) {a};
\node [color=green] at (6.5, 1.5) {b};
\node [color=green] at (7.5, 1.5) {b};
\draw [->, thick, color=red] (6.5, 0.5) -- (5.7, 1.2);
\node [color=red, thick, anchor=west] at (6.5, 0.5) {conflict};
\end{tikzpicture}\end{center}

\caption{Example of a valid and an invalid tiling}
\label{fig:valid-tiling}
\end{figure}
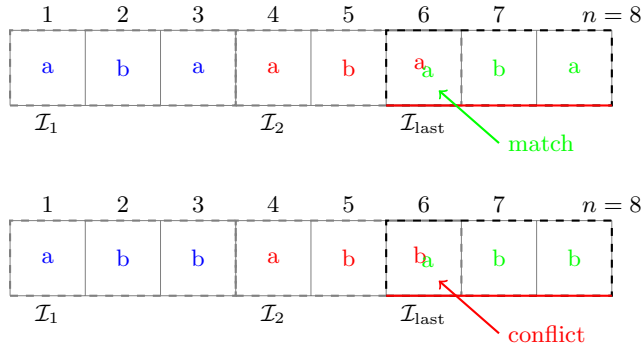
\vspace*{-0.5cm}
We are interested in the opposite process: given a string \(w\in\Sigma^*\) we are to determine a cover of it. This must be a substring, and moreover if there be multiple covers we aim for the smallest one, because $w$ is always a cover of $w$. This is called the minimal cover problem. If s is the minimal cover of $w$, then $s$ is the minimal cover of $s$. This stems from the fact that a cover of a cover of a string is also a cover of that string.

\begin{problem} [Minimal \textbf{S}tring \textbf{C}over \textbf{P}roblem, SCP]
\label{prob:scp}
Given \({w}\in\Sigma^*\) find \(\arg\,\min\lbrace\lvert{c}\rvert\vert{c}\in\Sigma^*,\,\exists\mathcal{I}\in\mathcal{T}_{\lvert{w}\rvert},\,{w}=\mathcal{I}\left({c}\right)\rbrace\) where any such \({c}\) leads to a valid tiling \(\left({c},\,\mathcal{I}\right)\).
\end{problem}

It is clear that we are looking for a \(\mathcal{I}\in\mathcal{T}_{\lvert{w}\rvert}\) and moreover we request that \(\lVert\mathcal{I}\rVert<\lvert{w}\rvert\) so as not to consider the trivial solution \(\lbrace1\rbrace\in\mathcal{T}_{\lvert{w}\rvert}\) and we would like to relax this problem by means of approximation. 

\begin{problem} [\textbf{A}pproximate String \textbf{C}over \textbf{P}roblem, ACP - specific version]
\label{prob:acp}
Let \(\delta\) be a metric over \(\Sigma^*\) and \(\Sigma_\mathcal{I}\subseteq\Sigma^{\lVert\mathcal{I}\rVert}\) denote the set of words that lead to valid tilings over it. We define \({w}^*=\mathcal{I}^*\left({s^*}\right)=\arg\,\min\big\lbrace\lvert{c}\rvert\vert{c}\in\arg\,\min\ \big\lbrace\delta\left({w},\,\mathcal{I}\left({s}\right)\right)\big\vert\mathcal{I}\in\mathcal{T}_{\lvert{w}\rvert},\,\lVert\mathcal{I}\rVert={m}<\lvert{w}\rvert,\,{s}\in\Sigma_\mathcal{I}\big\rbrace\big\rbrace\). The goal of the problem is to find $s^*,\,\mathcal{I}^*$. Informally, $s^*$ is the string of fixed size m, that produces valid tiling \(\mathcal{I}^*\left({s}^*\right)\) of minimum distance to $w$.
\end{problem}

\begin{problem} [\textbf{A}pproximate String \textbf{C}over \textbf{P}roblem, ACP - general version]
\label{prob:acpg}
Let \(\delta\) be a metric over \(\Sigma^*\) and \(\Sigma_\mathcal{I}\subseteq\Sigma^{\lVert\mathcal{I}\rVert}\) denote the set of words that lead to valid tilings over it. We define \({w}^*=\mathcal{I}^*\left({s^*}\right)=\arg\,\min\big\lbrace\lvert{c}\rvert\vert{c}\in\arg\,\min\ \big\lbrace\delta\left({w},\,\mathcal{I}\left({s}\right)\right)\big\vert\mathcal{I}\in\mathcal{T}_{\lvert{w}\rvert},\,\lVert\mathcal{I}\rVert<\lvert{w}\rvert,\,{s}\in\Sigma_\mathcal{I}\big\rbrace\big\rbrace\). The goal of the problem is to find $s^*,\,\mathcal{I}^*$. Informally, $s^*$ is the shortest string that produces valid tiling \(\mathcal{I}^*\left({s}^*\right)\) of minimum distance to $w$.
\end{problem}



\section{A polynomial-time approximation algorithm}
\label{sec:approx}
Consider Problem~\ref{prob:acp} (ACP - specific version) for which we are asked for a fixed-size string \({s}^*\) such that \({m}=\lvert{s}^*\rvert\) that produces a valid tiling \({w}^*=\mathcal{I}^*\left({s}^*\right)\) of minimum distance to the input string, w. This section addresses approximation algorithms for the case of the Hamming distance.

\begin{lemma} The string \({s}^\prime=\alpha^{m}\) always produces a valid tiling \({w}^\prime=\alpha^{n}\) and \(\lvert{w}\rvert-{d}\left({w},\,{w}^\prime\right)={freq}_{w}\left(\alpha\right)\)
\end{lemma}
\begin{proof}
Consider the tiling \(\mathcal{I}\) given by \(\overline{1,\,{n-m+1}}\ni {i}\rightarrow i\left({i}\right)={i}\). We have that \(\forall{i}\in\overline{1,\,{m-1}}\ {s}^\prime_{i}=\alpha={s}^\prime_{i+1}={s}^\prime_{m-\left(m-1\right)+i}\) and hence we indeed have a valid tiling \({w}^\prime=\mathcal{I}\left({s}^\prime\right)=\alpha^{n}\) for which \(\lvert{w}\rvert-{d}\left({w},\,{w}^\prime\right)=\sum\delta_{{w}_{i},\,{w}^\prime_{i}}=\sum\delta_{{w}_{i},\,\alpha}={freq}_{w}\left(\alpha\right)\)
\qed \end{proof}
\begin{corollary} We can always match at least the most frequent character i.e. \(\lvert{w}\rvert-{d}\left({w},\,{w}^*\right)\geq \underset{\alpha}{\max}\,{freq}\left(\alpha\right)={freq}_{max}\)
\end{corollary}
\begin{definition}[Cover Efficiency]
We define the cover efficiency function as:
\begin{center}\(\Sigma^*\times\Sigma^*\ni\left({w},\,{w}^\prime\right)\rightarrow\eta\left({w},\,{w}^\prime\right)=\frac{\lvert{w}\rvert-{d}\left({w},\,{w}^\prime\right)}{\lvert{w}\rvert-{d}\left({w},\,{w}^*\right)}\in\left[0,\,1\right]\)
\end{center}
Thus, an algorithm for the ACP problem 
is an \(\mathcal{O}\left({f}\left(\lvert{w}\rvert\right)\right)\) approximation of  iff \(\frac{1}{\eta_\mathcal{A}}\in\mathcal{O}\left({f}\left(\lvert{w}\rvert\right)\right)\), where $\eta_\mathcal{A}$ is the efficiency function of the algorithm. 
\end{definition}

\begin{lemma}If \({freq}_{max}\in\Omega\left(\sqrt{\lvert{w}\rvert}\right)\) then the algorithm providing \(\overline{1,\,{n-m+1}}\ni {i}\rightarrow i\left({i}\right)={i}\) and \(\left(\underset{\alpha}{\arg\max}\ {freq}_{w}\left(\alpha\right)\right)^{m}\in\Sigma_\mathcal{I}\) is a \(\Omega\left(\sqrt{\lvert{w}\rvert}\right)\) approximation.
\end{lemma}
\begin{proof}
\(\frac{1}{\eta}=\frac{\lvert{w}\rvert-{d}\left({w},\,{w}^*\right)}{\lvert{w}\rvert-{d}\left({w},\,{w}^\prime\right)}\leq\frac{\lvert{w}\rvert}{{freq}_{max}}\in\mathcal{O}\left(\frac{\lvert{w}\rvert}{\sqrt{\lvert{w}\rvert}}\right)=\mathcal{O}\left(\sqrt{\lvert{w}\rvert}\right)\)
\qed \end{proof}

\begin{lemma}\(\lvert{w}\rvert-{d}\left({w},\,{w}^*\right)\leq{m}\cdot{freq}_{max}\)
\end{lemma}
\begin{proof}\(\lvert{w}\rvert-{d}\left({w},\,{w}^*\right)=\sum_{i}\delta_{{w}_{i},\,{w}^*_{i}}=\sum_{{i},\,\alpha}\delta_{{w}_{i},\,\alpha}\delta_{\alpha,\,{w}^*_{i}}\leq\sum_{{i},\,{j}}\delta_{{w}_{i},\,{s}^*_{j}}\delta_{{s}^*_{j},\,{w}^*_{i}}\leq
\sum_{{i},\,{j}}\delta_{{w}_{i},\,{s}^*_{j}}=\sum_{{j}}{freq}_{{w}}\left({s}^*_{j}\right)\leq\sum_{{j}}{freq}_{max}={m}\cdot{freq}_{max}\)
\qed \end{proof}
\begin{corollary}If \({m}\in\mathcal{O}\left(\sqrt{\lvert{w}\rvert}\right)\) then the algorithm providing \(\overline{1,\,{n-m+1}}\ni {i}\rightarrow i\left({i}\right)={i}\) and \(\left(\underset{\alpha}{\arg\max}\ {freq}_{w}\left(\alpha\right)\right)^{m}\in\Sigma_\mathcal{I}\) is a \(\Omega\left(\sqrt{\lvert{w}\rvert}\right)\) approximation.
\end{corollary}
\begin{proof}
\(\frac{1}{\eta}=\frac{\lvert{w}\rvert-{d}\left({w},\,{w}^*\right)}{\lvert{w}\rvert-{d}\left({w},\,{w}^\prime\right)}\leq\frac{{m}\cdot{freq}_{max}}{{freq}_{max}}={m}\in\mathcal{O}\left(\sqrt{\lvert{w}\rvert}\right)\)
\qed \end{proof}

\begin{lemma}
If \({m}\in\Omega\left(\sqrt{\lvert{w}\rvert}\right),\,{m}\leq\lfloor\lvert{w}\rvert/3\rfloor\) and \(\alpha\in\Sigma\) then there is a valid tiling \(\mathcal{I}\in\mathcal{T}_{\lvert{w}\rvert}\) with \({s}^\prime=\overline{\alpha\dots\alpha{w}_{\lceil{m}/3\rceil+1}{w}_{\lceil{m}/3\rceil+2}\dots{w}_{{m}-\lceil{m}/3\rceil-1}\alpha\dots\alpha}\in\Sigma_\mathcal{I}\).
\end{lemma}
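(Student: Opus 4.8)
The plan is to construct an explicit tiling $\mathcal{I}\in\mathcal{T}_{\lvert w\rvert}$ whose consecutive tiles overlap only inside the all-$\alpha$ padding of $s'$, so that every overlap is conflict-free regardless of the letters copied from $w$ in the middle. Throughout write $n=\lvert w\rvert$ and $k=\lceil m/3\rceil$, and note that by construction $s'_1=\dots=s'_k=\alpha$ and $s'_{m-k}=\dots=s'_m=\alpha$, while positions $k+1,\dots,m-k-1$ carry the letters of $w$, which are irrelevant for validity.

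First I would isolate the consistency observation. By the definition of a valid tiling, two consecutive tiles placed with shift $d=\mathcal{I}_{i+1}-\mathcal{I}_i$ are conflict-free exactly when the length-$(m-d)$ prefix of $s'$ agrees with its length-$(m-d)$ suffix on their overlap. If the overlap length $\ell=m-d$ satisfies $\ell\le k$, then the prefix (positions $1,\dots,\ell$) and the suffix (positions $m-\ell+1,\dots,m$) both lie entirely inside the $\alpha$-blocks, so both equal $\alpha^{\ell}$ and trivially match. Hence every shift $d$ with $m-k\le d\le m$ yields a valid overlap, the upper bound $d\le m$ being forced by the tiling constraint $0\le\mathcal{I}_{i+1}-\mathcal{I}_i\le\lVert\mathcal{I}\rVert$.

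Next I would realise a full tiling using only such shifts. Since $\lVert\mathcal{I}\rVert=m$ forces $\mathcal{I}_1=1$ and $\mathcal{I}_{\mathrm{last}}=n+1-m$, the shifts must sum to $n-m$; it therefore suffices to write $n-m$ as a sum of $t$ integers each lying in $[m-k,m]$ for some $t\ge 1$. Once an admissible $t$ is fixed this is routine, since every integer in $[t(m-k),tm]$ is attainable by starting from all shifts equal to $m-k$ and greedily incrementing each up to $m$. So the whole construction reduces to finding an integer $t$ with $\tfrac{n-m}{m}\le t\le\tfrac{n-m}{m-k}$.

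The main obstacle is precisely this last existence claim, and I would settle it by a length estimate: the interval above has length $(n-m)\tfrac{k}{m(m-k)}$, and using $k\ge m/3$ (hence $m-k\le 2m/3$) this is at least $\tfrac{n-m}{2m}$; the hypothesis $m\le\lfloor n/3\rfloor$ gives $n-m\ge 2m$, so the length is at least $1$ and the interval must contain an integer. This is the only place the bound $m\le\lfloor n/3\rfloor$ is used. Finally I would remark that the assumption $m\in\Omega(\sqrt n)$ makes $m$ (and $k$) large enough that $d_i+d_{i+1}\ge 2(m-k)>m$, so no two non-consecutive tiles overlap; checking consecutive overlaps then suffices to certify $s'\in\Sigma_{\mathcal{I}}$, which completes the proof.
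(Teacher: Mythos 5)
Your proof is correct and follows essentially the same route as the paper's: both restrict consecutive shifts to $\left[m-\lceil m/3\rceil,\,m\right]$ so that every overlap falls inside the $\alpha$-padding of $s^\prime$, and both then use the same arithmetic fact (sums of $t$ such shifts fill an integer interval, and $n\geq 3m$, from $m\leq\lfloor n/3\rfloor$, guarantees some $t$ works — the paper phrases this as the intervals $\mathcal{N}^k=\overline{km-(k-1)\lceil m/3\rceil,\,km}$ covering all $n\geq 3m$, you phrase it as the interval $\left[\frac{n-m}{m},\,\frac{n-m}{m-k}\right]$ having length at least $1$). One caveat: your closing remark is superfluous, since the paper's definition of a valid tiling only constrains consecutive tiles, and its reasoning is shaky — $m\in\Omega\left(\sqrt{\lvert w\rvert}\right)$ is an asymptotic hypothesis yielding no pointwise bound, and the inequality $2(m-k)>m$ in fact fails for $m\in\lbrace 1,\,2,\,4\rbrace$ — but nothing in your argument depends on it, so the proof stands.
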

\begin{proof}
Let \(\mathcal{T}_{n}^{k}=\lbrace\mathcal{I}\in\mathcal{T}_{n}\vert\lvert\mathcal{I}\rvert={k},\,{s}^\prime\in\Sigma_\mathcal{I}\rbrace\) and \(\mathcal{T}^{k}=\underset{{n}}{\cup}\mathcal{T}_{n}^{k}\). We are interested in the quantity \(\lvert\mathcal{I}\left({s}^\prime\right)\rvert\) which induces \(\mathcal{N}^{k}=\lbrace{n}\in\mathbb{N}\vert\exists\mathcal{I}\in\mathcal{T}_{n},\,{s}^\prime\in\Sigma_\mathcal{I}\rbrace\). 

Note that \({s}^\prime\in\Sigma_\mathcal{I}\Leftrightarrow\lVert\mathcal{I}\rVert={m},\,\underset{{i}}{\min}\ \mathcal{I}_{i+1}-\mathcal{I}_{i}\geq{m}-\lceil{m}/3\rceil\). Hence, \(\mathcal{N}^{k+1}=\lbrace{n}+{d}\vert{n}\in\mathcal{N}^{k},\,{d}\in\overline{{m}-\lceil{m}/3\rceil,\,{m}}\rbrace\).

By definition, \(\mathcal{N}^1=\lbrace{m}\rbrace\). Hence \(\mathcal{N}^2=\overline{2{m}-\lceil{m}/3\rceil,\,2{m}}\) and generally \(\mathcal{N}^{k}=\overline{{k\cdot m}-\left({k}-1\right)\lceil{m}/3\rceil,\,{k\cdot m}}\). If there is a \({k}\) such that \({k\cdot m}\geq{\left(k+1\right)\cdot m}-{k}\lceil{m}/3\rceil\Leftrightarrow {k}\geq\frac{{m}}{\lceil{m}/3\rceil}\) then if \(\lvert{w}\rvert\geq{m\cdot k}\) we have \(\lvert{w}\rvert\in\underset{{k}^\prime}{\cup}\mathcal{N}^{{k}^\prime}\) and the theorem is proven. Since \({k}= 3\) is a viable option and \(\lvert{w}\rvert\geq 3\lfloor\lvert{w}\rvert/3\rfloor\geq 3{m}\), then this is the case.
\qed \end{proof}

Combining the previous results we obtain the desired approximation ratio.

\begin{theorem}
Let \(\alpha\) be the most frequent character in \({w}\in\Sigma^*\). We can compute an \(\mathcal{O}\left(\sqrt{\lvert{w}\rvert}\right)\) approximation for the \textbf{ACP} problem in \(\mathcal{O}\left(\lvert{w}\rvert^3\right)\) time.
\end{theorem}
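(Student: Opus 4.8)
The plan is to assemble the preceding lemmas into one algorithm and to control its ratio by a single case split on the block length $m$ at the scale $m\approx\sqrt{n}$, where $n=\lvert w\rvert$. Throughout write $OPT=n-d(w,w^*)$ for the number of positions matched by the optimal cover and $ALG=n-d(w,w')$ for the number matched by the returned cover; since the algorithm is an $\mathcal{O}(f(n))$ approximation exactly when $1/\eta=OPT/ALG\in\mathcal{O}(f(n))$, the whole task reduces to proving $OPT/ALG\in\mathcal{O}(\sqrt n)$. The guiding observation is that the two simple strategies already at our disposal give complementary bounds: the all-$\alpha$ cover yields $OPT/ALG\le m$ (good for small $m$), whereas the padded block cover yields $OPT/ALG=\mathcal{O}(n/m)$ (good for large $m$), and these two bounds cross precisely at $m=\sqrt n$, where both equal $\Theta(\sqrt n)$.

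First I would settle $m\le\sqrt n$. Here the algorithm returns $s'=\alpha^m$ with the identity tiling $\overline{1,\,n-m+1}\ni i\mapsto i$; by the first lemma this is valid and matches $ALG=freq_{max}$ positions, while the bound $OPT\le m\cdot freq_{max}$ gives $OPT/ALG\le m\le\sqrt n$. This is exactly the corollary of that bound, and it is insensitive to the size of $freq_{max}$, so it disposes of the entire small-block range on its own (the case $freq_{max}\in\Omega(\sqrt n)$ handled by the earlier lemma is then just a convenient alternative, not a necessity).

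Next I would settle $\sqrt n<m\le\lfloor n/3\rfloor$ using the block construction $s'=\alpha\dots\alpha\,w_{\lceil m/3\rceil+1}\dots w_{m-\lceil m/3\rceil-1}\,\alpha\dots\alpha$, padded at both ends by the most frequent character so that every overlap between consecutive tiles sits inside an $\alpha$-region and is therefore conflict-free; the final lemma guarantees this produces a valid tiling of size exactly $n$ in this range of $m$. The key point is that the leftmost tile, anchored at position $1$, copies $w$ verbatim on its middle block, so $ALG\ge m-2\lceil m/3\rceil-1=\Omega(m)$; combined with the trivial $OPT\le n$ this gives $OPT/ALG=\mathcal{O}(n/m)=\mathcal{O}(\sqrt n)$. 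Together with the previous paragraph this already covers all $m\le\lfloor n/3\rfloor$.

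The residual regime $\lfloor n/3\rfloor<m<n$ is the step I expect to be the real obstacle, because there the few admissible tiles force the shift to be (nearly) the single value $n-m$ and the tile to be periodic of period $n-m$, so the uniform $\lceil m/3\rceil$-padding need not tile $n$ exactly and the middle-copying trick no longer guarantees $\Omega(m)$ matches. I would resolve it by letting the padding length vary so that a valid size-$n$ tiling always exists, and then by exploiting that in this few-tile regime the optimum is itself confined to period-$(n-m)$ strings; selecting, in each residue class modulo $n-m$, the character occurring most often in $w$ is optimal there, so $OPT/ALG=\mathcal{O}(1)\subseteq\mathcal{O}(\sqrt n)$. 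For the running time, the algorithm computes the letter frequencies once in $\mathcal{O}(n)$, and over the $\mathcal{O}(n^2)$ candidate anchor/shift configurations it builds each induced tiling and scores it against $w$ in $\mathcal{O}(n)$, returning the best; this crude accounting already fits within the claimed $\mathcal{O}(n^3)$ budget.
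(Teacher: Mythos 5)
Your first two cases coincide with the paper's proof: for $m\le\sqrt{n}$ you return $\alpha^m$ and combine $ALG=freq_{max}$ with $OPT\le m\cdot freq_{max}$, and for $\sqrt{n}<m\le\lfloor n/3\rfloor$ you use the padded tile, whose middle block is copied verbatim by the tile anchored at position $1$, giving $ALG=\Omega(m)$, $OPT\le n$, and ratio $\mathcal{O}(n/m)=\mathcal{O}(\sqrt n)$. This part is sound, and you make explicit the ratio computation that the paper leaves implicit. The genuine gap is in the residual regime $\lfloor n/3\rfloor<m<n$, exactly the step you flagged as the obstacle. Your resolution rests on the claim that there the achievable covered strings are the period-$(n-m)$ strings, so that taking, in each residue class modulo $n-m$, the most frequent character of $w$ is optimal. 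That claim holds only when $n-m\le m$, i.e.\ $m\ge n/2$, because only then is $\lbrace 1,\,n+1-m\rbrace$ a valid tiling; for $\lfloor n/3\rfloor<m<n/2$ the constraint $\mathcal{I}_{i+1}-\mathcal{I}_i\le\lVert\mathcal{I}\rVert$ forbids it, every valid tiling has at least three tiles, and for a three-tile tiling $\lbrace 1,\,\mathcal{I}_2,\,n+1-m\rbrace$ the two overlaps force the tile to have \emph{both} periods $d_1=\mathcal{I}_2-1$ and $d_2=n+1-m-\mathcal{I}_2$ with $d_1+d_2=n-m$ --- a strictly smaller family than the period-$(n-m)$ strings. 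Concretely, take $n=9$, $m=4$: for every admissible $\mathcal{I}_2\in\lbrace 2,3,4,5\rbrace$ the overlap constraints collapse $s$ to a constant string, so the only achievable covered strings are $c^9$; yet for $w=abcdeabcd$ your residue-class rule outputs $w$ itself ($9$ matches, period $5$), which is coverable by no length-$4$ tile, while the true optimum is $2$. In this window your algorithm either returns an infeasible answer or, if restricted to feasible ones, the claimed $\mathcal{O}(1)$ ratio has no justification.

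The paper closes this regime by a different device: for $m>\lceil n/3\rceil$ it asserts that one may restrict attention to tilings with $\lvert\mathcal{I}\rvert\le 3$ and then checks \emph{all} of them --- since $\mathcal{I}_1=1$ and $\mathcal{I}_{last}=n+1-m$ are forced, only $\mathcal{I}_2$ is free --- computing for each candidate tiling the best consistent tile (in every class of positions that the overlaps force to be equal, take the most frequent character of $w$ among the covered positions) and returning the best, all within the cubic budget. So the fix is an exhaustive search over few-tile tilings, not a periodicity characterization. Your final running-time sentence in fact gestures at such an enumeration of anchor/shift configurations, so the repair is to make that enumeration (with the best-consistent-tile computation) the correctness argument and to discard the period-$(n-m)$ claim; as written, the third case does not go through. (Be aware also that the paper's own assertion $\lvert\mathcal{I}\rvert\le 3$ is stated without proof, so this case deserves care in any write-up.)
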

\begin{proof}
Let \(\alpha=\underset{\beta}{\arg\max}\ {freq}_{w}\left(\beta\right)\) which we can obtain in linear time. The algorithm works as follows.

\begin{enumerate}

\item If \({m}\in\mathcal{O}\left(\sqrt{\lvert{w}\rvert}\right)\) then return \(\alpha^{m}\).

\item If \({m}\leq\lceil{n}/3\rceil\) then return \(\overline{\alpha\dots\alpha{w}_{\lceil{m}/3\rceil+1}{w}_{\lceil{m}/3\rceil+2}\dots{w}_{{m}-\lceil{m}/3\rceil-1}\alpha\dots\alpha}\).

\item Otherwise, we  have \(\lvert\mathcal{I}\rvert\leq 3\) and thus we can check all the possibilities in cubic time.
\end{enumerate}
\qed \end{proof}

\section{An FPT Algorithm for the Approximate String Cover Problem}
\label{sec:fpt}
\begin{definition}[Product Metrics]
Let \(\lbrace\left({X}_{i},\,{d}_{i}\right)\rbrace_{{i}\in\overline{1,\,{n}}}\) be metric spaces. Then a metric \({d}\) such that \(\left(\underset{{i}=1}{\overset{{n}}{\prod}}{X}_{i}\right)^2\ni\left(\underset{{i}=1}{\overset{{n}}{\oplus}}\mathbf{x}_{i},\,\underset{{i}=1}{\overset{{n}}{\oplus}}\mathbf{y}_{i}\right)=\left(\mathbf{x},\,\mathbf{y}\right)\rightarrow{d}\left(\mathbf{x},\,\mathbf{y}\right)\geq\underset{{i}=1}{\overset{{n}}{\sum}}{d}_{i}\left(\mathbf{x}_{i},\,\mathbf{y}_{i}\right)\in\mathbb{R}_+\) is called superadditive. We define similarly subadditive and additive metrics.
\end{definition}

The Hamming distance over \(\Sigma^{n}\) is the additive metric for \({X}_{i}=\Sigma\) and \({d}_{i}\left({x},\,{y}\right)=1-\delta_{{x},\,{y}}\) with \({i}\in\overline{1,\,{n}}\).  Another additive metric is that for shift spaces i.e. \({d\left(u,\,v\right)}=\sum_{i}\frac{{d_i}\left({u},\,{v}\right)}{2^{i}}\).


\begin{theorem}
For a (super)additive metric ACP can be solved in \(\mathcal{O}\left(\lvert\Sigma\rvert^{m}{m}^2\lvert{w}\rvert^2\right)\) (probabilistic) time with \(\mathcal{O}\left(\lvert{w}\rvert\right)\) space.
\end{theorem}
\begin{proof}
For a given vector \(\mathbf{v}\in\mathbb{Z}^{\lvert{w}\rvert+1-\lvert{s}\rvert}\) such that \(\mathbf{v}_1=1\) and \(\forall{i}\in\overline{2,\,\lvert{w}\rvert+1-\lvert{s}\rvert}\ \mathbf{v}_{i}> 0\) iff there exists a valid tiling of arbitrary length \(\mathcal{I}\) such that \(\mathcal{I}_{last}={i}\), \({s}\in\Sigma_\mathcal{I}\) and then \(\mathbf{v}_{\mathcal{I}_{last}}=\mathcal{I}_{\lvert\mathcal{I}\rvert-1}\), we say that \(\mathbf{v}\) encodes \(\mathcal{J}\) i.e. \(\mathcal{J}\in{Dec}\left(\mathbf{v}\right)\) iff \(\forall{i}\in\overline{1,\,\lvert\mathcal{J}-1\rvert}\ \mathbf{v}_{\mathcal{J}_{i+1}}=\mathcal{J}_{i}\). 

Given the last position of the encoded tiling we can always recover it by following the backward orbit described above i.e. there is a function \(\mathbf{v},\,{i}\rightarrow{dec}\left(\mathbf{v},\,{i}\right)\in\mathcal{T}_{i-1+\lvert{s}\rvert}\). Consider the sequence \(\mathit{i}_1= \mathbf{v}_{i},\,\mathit{i}_{j+1}=\mathbf{v}_{\mathit{i}_{j}}\) which is  cyclic after a point with a 1-cycle around 1. Hence \(\mathit{i}^{-1}\left(1\right)=\lbrace 1,\,{k}\neq 1\rbrace\) and so we define \(\lvert{dec}\left(\mathbf{v},\,{i}\right)\rvert={k}+1,\,{dec}\left(\mathbf{v},\,{i}\right)_{j}=\mathit{i}_{k+2-j}\). This applies if and only if the first position is not 1 in which case we have the trivial tiling \({dec}\left(\mathbf{v},\,1\right)=\lbrace 1\rbrace\in\mathcal{T}_{\lvert s\rvert}\)

If \({d}\left({w},\,\mathcal{I}\left({s}\right)\right)<\delta\) and \(\mathcal{J}=\left(\mathcal{I}_1,\,\dots,\,\mathcal{I}_{\lvert\mathcal{I}\rvert-1}\right)\) then due to supperadditivity \({d}\left(\overline{{w}},\,\mathcal{J}\left({s}\right)\right)<\delta\) where \(\overline{{w}}\) is the appropiate truncation of w and moreover if \(\mathbf{v}\) encodes \(\mathcal{I}\) then it also encodes \(\mathcal{J}\). Hence, there exists \(\mathcal{I}\) under the tolerance limit iff it is encoded by a vector which only encodes tolerable tilings. This gives the FPT algorithm for super-additive metrics, using the sequence \(\mathbf{v}^{{n}\in\mathbb{N}^*}\subseteq\mathbb{Z}^{\lvert{w}\rvert+1-\lvert{s}\rvert}\), defined as follows: 
\begin{center}\(\begin{cases}
\mathbf{v}^1_{i}=\delta_{{i},\,1} \\
\mathbf{v}^{n+1}_{i}={choose-one}\left(\lbrace{j}\vert\mathbf{v}^{n}_{j}>0,\,\lvert{s}\rvert+{j}-{i}\in\Delta_{s},\,{d}\left(\overline{{w}},\,{dec}\left(\mathbf{v},\,{j}\right)\cup\lbrace{i}\rbrace\right)<\delta\rbrace\right)
\end{cases}\)\end{center}

Moreover, if the metric is additive, then a greedy algorithm works and as such we have the sequence \(\mathbf{v}^{{n}\in\mathbb{N}^*}\subseteq\mathbb{Z}^{\lvert{w}\rvert+1-\lvert{s}\rvert}\), defined as follows: \begin{center}\(\begin{cases}
\mathbf{v}^1_{i}=\delta_{{i},\,1} \\
\mathbf{v}^{n+1}_{i}=\underset{{j}}{\arg\min}\left(\lbrace{d}\left(\overline{{w}},\,{dec}\left(\mathbf{v},\,{j}\right)\cup\lbrace{i}\rbrace\right)\vert\mathbf{v}^{n}_{j}>0,\,\lvert{s}\rvert+{j}-{i}\in\Delta_{s}\rbrace\right)
\end{cases}\)\end{center}
To prove that the greedy approach let \(\mathcal{I}^*\) the optimum tiling for a given tile s. By construction, since a tiling ending with \(\mathcal{I}^*_{last}\) exists, \(\mathbf{v}_{\lvert{w}\rvert+1-\lvert{s}\rvert} > 0\) and hence our algorithm always provides a tiling. Moreover, it agrees with the optimum tiling on the last element. Hence, if \(\lvert{w}\rvert = \lvert{s}\rvert\) the optimum tiling is provided. 
\qed \end{proof}

\section{Pseudometrics and Halo Factorization}
\label{sec:pseudo}

Firstly, we generalize the \textbf{ACP} for pseudometric spaces. Recall the definition of a metric space.
\begin{definition}[Metric Spaces] Let $X$ be a set and $d:X\times X\rightarrow \mathbb{R}$ map the pairs of $points$ in $X$ to the reals. $d$ is a metric on $X$ if and only if:

$d\left(x,\,y\right)\geq 0\ \forall x,\,y\in X\ \textmd{(positivity)}$

$d\left(x,\,y\right)=d\left(y,\,x\right)\ \forall x,\,y\in X\ \textmd{(symmetry)}$

$d\left(x,\,y\right)+d\left(y,\,z\right)\geq d\left(x,\,z\right)\ \forall x,\,y,\,z\in X\ \textmd{(triangle inequality)}$

$x=y\Leftrightarrow d\left(x,\,y\right)=0\ \forall x,\,y\in X \textmd{(identity of indiscernibles)}$
\end{definition}

Let $\Sigma$ be the set of message types in a communication protocol where $q$ is some poll, with $y$ and $n$ being acceptable answers and $NACK$ meaning that the poll was not accepted by the other party. It might be that the pattern we are looking for contains a successfull dialogue, regardless of its content per se. Then, for us the strings $qy$ and $qn$ should be indiscernible. Metric spaces do not allow this. If we relax the identity of indiscernibles we obtain a pseudometric.

\begin{definition}[Pseudometric Spaces] Let $X$ be a set and $d:X\times X\rightarrow \mathbb{R}$ map the pairs of $points$ in $X$ to the reals. $d$ is a pseudometric on $X$ if and only if:

$d\left(x,\,y\right)\geq 0\ \forall x,\,y\in X\ \textmd{(positivity)}$

$d\left(x,\,y\right)=d\left(y,\,x\right)\ \forall x,\,y\in X\ \textmd{(symmetry)}$

$d\left(x,\,y\right)+d\left(y,\,z\right)\geq d\left(x,\,z\right)\ \forall x,\,y,\,z\in X\ \textmd{(triangle inequality)}$

$x=y\Rightarrow d\left(x,\,y\right)=0\ \forall x,\,y\in X$

\end{definition}




Let $\approx$ denote indiscernibility on $X$ with regard to the pseudometric $d$. Then $\approx$ is an equivalence relation on $X$ and $d$ is a metric on $X/\approx$. The elements of $X/\approx$ are called \emph{halos}, i.e. $\hat{x}=\lbrace y\in X\vert d\left(x,\,y\right)=0\rbrace$ is the halo around $x$. We prove that if $d$ is additive and homogenous \textbf{ACP} has the same complexity on $\left(X,\,d\right)$ as it does on $\left(X/\approx,\,d\right)$. 

This result is particularly useful when analyzing patterns in communication which can be recovered from use of metadata alone. Consider a validation proccedure \cite{SAS} in which a device sends a value to be transferred to a client's account by a host, the host replies with the same value and an asset number for the device (to confirm the athority of the host) and an index number to be written on the printed ticket and gets a print confirmation from the device. Any of these steps can go wrong, but a successfull transaction is easy to recognize. Two messages can be considered indiscernible if they have the same type. 

\begin{remark}
For two indiscernible strings, $w\approx w^\prime$, \textbf{ACP} has the same solution with respect to $w$ as it does with respect to $w^\prime$. Since $w \approx \hat{w}$ from any solution over the metric space we  obtain a solution over the pseudometric space, and all we need to do the oposite is to be able to quickly compute the factorization of a given string. This can only be done quickly under some additional conditions such as pseudometric additivity and homogeneity. 
\end{remark}


\begin{definition}[Additive Metrics] Let $\lbrace \left(X_1,\,d_2\right),\,\left(X_2,\,d_2\right)\,\dots,\,\left(X_n,\,d_n\right)\rbrace$ be a family of (pseudo)metric spaces. A (pseudo)metric $d$ over $\underset{i=1}{\overset{n}\oplus}X_i$ is additive if 
$ d\left(x_1x_2\dots x_n,\,y_1y_2\dots y_n\right)=\underset{i=1}{\overset{n}\sum}d_i\left(x_i,\,y_i\right),\,\forall x_i,\,y_i\in X_i$.
Moreover $d$ is homogenous if $d_i=d_j\forall i,\,j\in\lbrace{1,\,2,\,\dots,\,n}\rbrace$.
\end{definition}

\begin{remark}
If $d$ is an additive pseudometric over $\underset{i=1}{\overset{n}\oplus}X_i$ then $ x_1x_2\dots x_n \approx y_1y_2\dots y_n \Leftrightarrow x_i\approx y_i \forall i\in\lbrace 1,\,2,\,\dots,\,n\rbrace $.
\end{remark}

\begin{corollary}
A pseudometric $d$ is additive over $\underset{i=1}{\overset{n}\oplus}X_i$ if and only if it is additive over $\underset{i=1}{\overset{n}\oplus}X_i/\approx$.
\end{corollary}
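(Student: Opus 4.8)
The plan is to reduce the equivalence to two elementary facts: that a pseudometric is constant on pairs of halos, and that indiscernibility on the product decomposes coordinatewise. Write $X = \bigoplus_{i=1}^n X_i$, let $\approx$ denote indiscernibility on $X$ with respect to $d$, and let $\approx_i$ denote indiscernibility on $X_i$ with respect to $d_i$. First I would record the descent property: for any pseudometric and any $x \approx x'$, $y \approx y'$, the triangle inequality gives $d(x,y) \le d(x,x') + d(x',y') + d(y',y) = d(x',y')$ and symmetrically the reverse inequality, so $d(x,y) = d(x',y')$. Thus $d$ is constant on each pair of halos and descends to a well-defined map on $X/\approx$ that agrees with $d$ on every choice of representatives; the same argument applied to each $d_i$ shows that $d_i(x_i,y_i)$ depends only on $\hat x_i,\hat y_i$. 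These facts hold for any pseudometric and require no additivity hypothesis.

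Next I would invoke the preceding Remark, which states $x_1\cdots x_n \approx y_1\cdots y_n \Leftrightarrow x_i \approx_i y_i$ for all $i$. This identifies the quotient of the product with the product of the quotients, $\left(\bigoplus_{i=1}^n X_i\right)/\approx \;\cong\; \bigoplus_{i=1}^n \left(X_i/\approx_i\right)$, sending the halo $\widehat{x_1\cdots x_n}$ to the tuple $(\hat x_1,\dots,\hat x_n)$. This identification is what endows the right-hand space with its coordinate structure and hence makes the phrase ``$d$ is additive over $\bigoplus_i X_i/\approx$'' meaningful. I expect establishing and correctly using this factorization to be the main subtlety, since it is precisely what connects the two notions of additivity and what forces each component distance on the quotient to match its value on representatives.

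Finally I would observe that both sides of the defining identity $d(x_1\cdots x_n, y_1\cdots y_n) = \sum_{i=1}^n d_i(x_i,y_i)$ are invariant under replacing each argument by an indiscernible one: the left side by the descent of $d$, the right side by the descent of each $d_i$. Consequently the identity holds for all representatives in $X$ if and only if it holds for one representative of each halo, that is, if and only if the corresponding identity $d(\hat x_1\cdots\hat x_n,\hat y_1\cdots\hat y_n) = \sum_{i=1}^n d_i(\hat x_i,\hat y_i)$ holds on the quotient. Reading this chain of equalities forward shows that additivity on $X$ implies additivity on $X/\approx$, and reading it backward, using the factorization of the previous paragraph to supply the tuple structure on which the quotient identity is stated, proves the converse. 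This completes the argument.
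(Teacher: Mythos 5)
Your argument captures exactly the two ingredients the paper's own (unwritten) derivation relies on: the descent property of pseudometrics and the coordinatewise decomposition of indiscernibility from the preceding Remark, so in substance you are on the paper's route. The forward implication as you write it is correct: additivity over $\oplus_{i=1}^{n}X_i$ licenses the Remark, the Remark gives the identification $\left(\oplus_{i=1}^{n}X_i\right)/\approx\ \cong\ \oplus_{i=1}^{n}\left(X_i/\approx_i\right)$, and both sides of the additivity identity descend to the quotient.

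The gap is in the converse. The Remark is proved (and can only be proved) under the hypothesis that $d$ is additive over $\oplus_{i=1}^{n}X_i$: the paper's proof of the corresponding Lemma reads $0=d(u,v)=\sum_i d_i(u_i,v_i)\Leftrightarrow d_i(u_i,v_i)=0$ for all $i$, which uses precisely the identity you are trying to establish in the backward direction. For a general pseudometric $d$ on the product, coordinatewise indiscernibility need not imply $d$-indiscernibility, so the ``tuple structure'' you borrow from the factorization is not available there; invoking it makes the converse circular. The repair is to notice that the converse does not need the factorization at all: for the statement ``$d$ is additive over $\oplus_{i}X_i/\approx$'' to be meaningful in the first place, the hypothesis must already include that $d$ descends to a well-defined function $\bar d$ on tuples of halos, i.e.\ that $d(x,y)$ depends only on $(\hat x_1,\dots,\hat x_n)$ and $(\hat y_1,\dots,\hat y_n)$. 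Granting that, the converse is the one-line chain $d(x,y)=\bar d\left((\hat x_i)_i,(\hat y_i)_i\right)=\sum_{i=1}^{n}d_i(\hat x_i,\hat y_i)=\sum_{i=1}^{n}d_i(x_i,y_i)$, where the last equality is the automatic descent of each $d_i$ that you established in your first paragraph. With the converse rewritten this way, and the factorization reserved for the forward direction where its hypothesis is actually available, your proof is complete.
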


If we have access to the function that mapped each $X_1$ to $X_1/\approx$, then factorizing the indiscernibles is an easy task, since we can perform it element by element. This map can be computed  in quadratic time, $\mathcal{O}\left(\lvert X\rvert\lvert X/\approx\rvert\right)$. However, we have to prove that collisions are handled properly.

For a given string $s$ to be a cover of some $w$, $w$ has to be a sequence of repetitions of $s$, with some eventual overlaps. On these overlaps, some suffix of $s$ has to match some prefix. In \cite{AmirLLLP17}, Amir et al.  represent this using \emph{string masks}, where the mask $m$ of a string $s$ is a vector $m\left[1,\,\dots,\,\lvert s\rvert\right]$ where $m_i=1$ if and only if the $i$-length prefix matches the $i$-length suffix of $s$. We choose to represent legal overlaps as sets.

\begin{definition}
For a given string $s$, let $\Delta_s$ be the set of legal overlaps of $s$:

$$\Delta_s = \lbrace i\in\lbrace1,\,\dots,\,\lvert s\rvert\rbrace \vert s_j = s_{\lvert s\rvert -i +j}\forall j\in\lbrace{1,\,\dots,\,i}\rbrace\rbrace$$
\end{definition}

In our setting, to factorize each character is equivalent to sequentially replace each occurence of a character $x$ with its chosen representative $\hat{x}$. This is where homogeneity comes into play: if $\hat{x}^i\neq \hat{x}^j$ for some $i,\,j\in\lbrace{1,\,\dots,\,\lvert s\rvert}$, then their collision is restricted, and thus the best tiling may be invalidated by factorization (consider the case where for some characters $x,\,y\in X_i\cap X_j$ we had $d_i\left(x,\,y\right)=0$ but $d_j\left(x,\,y\right)\neq 0$ ). 

\begin{lemma}Let \(\lbrace\left({X}_{i},\,{d}_{i}\right)\rbrace\) be pseudometric spaces and \({d}\) be the additive pseudometric. Then \(\hat{{u}}=\hat{{v}}\Leftrightarrow\forall{i}\in\overline{1,\,{n}}\ \hat{{u}_{i}}=\hat{{v}_{i}}\).
\end{lemma}
\begin{proof}
\(\hat{{u}}=\hat{{v}}\Leftrightarrow0={d}\left({u},\,{v}\right)=\sum_{i}{d}_{i}\left({u}_{i},\,{v}_{i}\right)\Leftrightarrow\forall{i}\in\overline{1,\,{n}}\ {d}\left({u}_{i},\,{v}_{i}\right)\Leftrightarrow\forall{i}\in\overline{1,\,{n}}\ \hat{{u}_{i}}=\hat{{v}_{i}}\)
\qed \end{proof}

\begin{lemma}Let \({s}\in\Sigma^*\) and \({s}^\prime\) be the string obtained by replacing all the occurences of a character \(\alpha\in\Sigma\) in \({s}\) with \(\beta\in\Sigma\). Then \(\Delta_{s}\subseteq\Delta_{{s}^\prime}\).
\end{lemma}

\begin{proof}
\(\Delta_{s}=\lbrace\delta\in\mathbb{N}^*\vert\forall{j}\in\overline{1,\,\delta}\ {s}_{j}={s}_{\lvert s\rvert -\delta +j}\rbrace=
\lbrace\delta\in\mathbb{N}^*\vert\forall{j}\in\overline{1,\,\delta}\ \left({s}_{j}={s}_{\lvert s\rvert -\delta +j}=\alpha\lor{s}_{j}={s}_{\lvert s\rvert -\delta +j}=\beta\right)\lor{s}_{j}={s}_{\lvert s\rvert -\delta +j}\neq\alpha,\,\beta\rbrace\subseteq
\lbrace\delta\in\mathbb{N}^*\vert\forall{j}\in\overline{1,\,\delta}\ {s}^\prime_{j}={s}^\prime_{\lvert s^\prime\rvert -\delta +j}=\beta\lor{s}^\prime_{j}={s}^\prime_{\lvert s^\prime\rvert -\delta +j}\neq\alpha,\,\beta\rbrace=
\lbrace\delta\in\mathbb{N}^*\vert\forall{j}\in\overline{1,\,\delta}\ {s}^\prime_{j}={s}^\prime_{\lvert s^\prime\rvert -\delta +j}\rbrace=\Delta_{{s}^\prime}
\)
\qed \end{proof}

\begin{corollary}Let \(\mathcal{I}\in\mathcal{T}_{n}\), \({s},\,{s}^\prime\) as above. Then \({s}\in\Sigma_\mathcal{I}\Rightarrow{s}^\prime\in\Sigma_\mathcal{I}\).
\end{corollary}

\begin{theorem}Let \(\left(\Sigma,\,{d}\right)\) be a pseudometric space. Then SCP/ACP for the additive pseudometric over \(\Sigma^{n}\) is equivalent with the MSC/ACP for the addive metric over \(\left(\Sigma/\approx \right)^{n}\), modulo \(\mathcal{O}\left(\lvert{w}\rvert\lvert\Sigma\rvert\right)\) work.
\end{theorem}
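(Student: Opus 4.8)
The plan is to establish the equivalence as a two-way, value-preserving reduction whose only cost is the construction and application of the factorization map $\Sigma\to\Sigma/\approx$. First I would build this map, grouping the letters of $\Sigma$ into halos by pairwise comparison; restricted to the letters actually occurring in $w$ this costs $\mathcal{O}\left(\lvert w\rvert\lvert\Sigma\rvert\right)$ as claimed, and applying it to $w$ to obtain $\hat w\in\left(\Sigma/\approx\right)^{n}$ is linear. By the lemma stating $\hat u=\hat v\Leftrightarrow\forall i\ \hat{u_i}=\hat{v_i}$ (equivalently the remark on additive pseudometrics), this element-wise factorization is exactly the factorization of the whole string, so $\hat w$ is well defined and $w\approx\hat w$.

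The core of the argument is to show that neither the objective nor the feasibility changes under factorization. For the objective, I would use additivity together with the standard fact that indiscernible points are equidistant from everything: if $d_0$ is the common coordinate (pseudo)metric then $d_0(x,y)=d_0(\hat x,\hat y)$, so for any tiling $\mathcal{I}$ and any tile $s$,
\[
d\left(w,\,\mathcal{I}(s)\right)=\sum_i d_0\left(w_i,\,\mathcal{I}(s)_i\right)=\sum_i d_0\left(\hat{w_i},\,\mathcal{I}(\hat s)_i\right)=d\left(\hat w,\,\mathcal{I}(\hat s)\right),
\]
where on the right $d$ is now the genuine metric on the quotient (this is the corollary asserting that additivity is preserved when passing to $\Sigma/\approx$). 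Hence factorizing $w$, any candidate tile $s$, and the induced cover simultaneously leaves the distance—and the length $\lvert s\rvert$—unchanged.

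For feasibility I would transfer valid tilings in both directions. The forward direction is exactly the corollary $s\in\Sigma_\mathcal{I}\Rightarrow s^\prime\in\Sigma_\mathcal{I}$ applied one halo-merge at a time: factorization only merges letters, so by the $\Delta_{s}\subseteq\Delta_{s^\prime}$ lemma it can only enlarge the set of legal overlaps, and a cover of $\mathcal{I}$ stays a cover. For the backward direction, given an optimal halo-tile $\hat t$ feasible for some $\mathcal{J}$ over the quotient, I would lift it by fixing one representative per halo; since this is an injective relabeling of the symbols of $\hat t$ it leaves $\Delta$ unchanged, so the lifted $t\in\Sigma^*$ is feasible for $\mathcal{J}$ and, by the objective computation above, achieves the same distance. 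Combining the two directions gives the optimum over $\Sigma^{n}$ equal to the optimum over $\left(\Sigma/\approx\right)^{n}$, and shows that a solution to either problem yields a solution to the other after the map work already accounted for. The same chain, specialized to distance zero, handles the SCP/MSC case.

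I expect the main obstacle to be the feasibility transfer rather than the objective: one must rule out that collapsing letters into halos silently breaks an overlap constraint, and conversely that lifting reintroduces a conflict. This is precisely where homogeneity is essential—because every coordinate uses the same $d_0$, a letter's halo is position-independent, so the element-wise factorization is coherent and the $\Delta_{s}\subseteq\Delta_{s^\prime}$ monotonicity applies uniformly; without it a symbol could be indiscernible in one position yet discernible in another (the $d_i(x,y)=0\neq d_j(x,y)$ situation flagged earlier), and the best tiling could be invalidated by factorization.
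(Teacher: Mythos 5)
Your proposal is correct and follows essentially the same route as the paper: both fix a representative per halo (the paper's $\psi=\phi\circ\hat{\ }$ is exactly your lifting map), both transfer feasibility via the $\Delta_{s}\subseteq\Delta_{s^\prime}$ monotonicity corollary, and both preserve the objective by the fact that indiscernible strings are equidistant from everything (the paper writes this as a triangle-inequality sandwich, you as a coordinate-wise sum under additivity, which is the same fact). Your write-up is somewhat more explicit than the paper's on the backward (lifting) direction, which the paper compresses into ``and vice versa,'' but there is no substantive difference in approach.
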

\begin{proof}
Let \(\left({s}^*,\,\mathcal{I}^*\right)\) be the ACP solution for \({w}\) with respect to the pseudometric \({d}\), \(\Sigma/\approx{\ }\ni\hat{{x}}\rightarrow\phi\left(\hat{{x}}\right)\in\hat{{x}}\subseteq\Sigma\), \(\psi=\phi\circ\hat{\ }\) and \({s}^{*^\prime}=\psi\left({s}^*\right)\). We have that \({s}^{*^\prime}\in\Sigma_{\mathcal{I}^*}\) and moreover \({d}\left(\mathcal{I}^*\left({s}^{*^\prime}\right),\,{w}\right)=-{d}\left(\mathcal{I}^*\left({s}^*\right),\,\mathcal{I}^*\left({s}^{*^\prime}\right)\right)+{d}\left(\mathcal{I}^*\left({s}^{*^\prime}\right),\,{w}\right)\leq{d}\left(\mathcal{I}^*\left({s}^*\right),\,{w}\right)\leq{d}\left(\mathcal{I}^*\left({s}^*\right),\,\mathcal{I}^*\left({s}^{*^\prime}\right)\right)+{d}\left(\mathcal{I}^*\left({s}^{*^\prime}\right),\,{w}\right)={d}\left(\mathcal{I}^*\left({s}^{*^\prime}\right),\,{w}\right)\). This is true for any other candidate string, not just for the optimum, and hence any solution for the MSC/ACP \(\left(\Sigma/\approx{\ }\right)^{n}\) leads to a solution MSC/ACP for the addive metric over \(\Sigma^{n}\) and vice versa, one via \(\hat{\ }\circ\psi\) and the other via \({x}\rightarrow\hat{{x}}\). Tabulating the functions themselves can be done  in \(\mathcal{O}\left(\lvert{w}\rvert\lvert\Sigma\rvert\right)\).
\qed \end{proof}

\section{Block Variations}
\label{sec:block}

Consider once again the situation where transactional data is to be analyzed by metadata. A transaction is a string of operations just like a message is a string of characters. Theoretically, if the idle operation is a valid one, then the number of transactions is unbounded even though it may be represented over a bounded alphabet of operations. We would like to investigate how the complexity of the \textbf{ACP} changes when we take a lower-level approach and switch the data representation. Naturally, if we want to represent the same patterns we have to consider an equivalent metric.

\begin{theorem}
\label{thm:induced-metric_art}
Let $\left(X,\,d\right)$ be an (extended) (pseudo-)metric space and $\phi:X\rightarrow Y$ an injection. Then there exists \(d^\prime:Y\times Y\rightarrow\bar{\mathbb{R}}_+\) such that $\left(Y,\,d^\prime\right)$ is an extended (pseudo-)metric space and moreover $d^\prime_{\vert\phi\left(X\right)\times\phi\left(X\right)}=d$ and $d^\prime_{\vert\phi\left(X\right)\times\left(Y\backslash\phi\left(X\right)\right)}=\infty$. We say that $d^\prime$ is the metric induced by $\phi$ on $Y$.
\end{theorem}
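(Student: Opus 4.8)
The plan is to construct $d'$ explicitly and then check the extended (pseudo)metric axioms by a short case analysis. Since $\phi$ is injective, $\phi^{-1}$ is well defined on $\phi(X)$, so I would set
\[
d'(y,y') = \begin{cases} d\bigl(\phi^{-1}(y),\,\phi^{-1}(y')\bigr) & y,y'\in\phi(X),\\ 0 & y=y'\notin\phi(X),\\ \infty & \text{otherwise.}\end{cases}
\]
The two restriction claims are then immediate from the definition: for $x,x'\in X$ we have $d'\bigl(\phi(x),\phi(x')\bigr)=d\bigl(\phi^{-1}(\phi(x)),\phi^{-1}(\phi(x'))\bigr)=d(x,x')$, so $d'$ agrees with $d$ on $\phi(X)\times\phi(X)$, and whenever exactly one argument lies in $\phi(X)$ the third line gives $\infty$, which is the claim on $\phi(X)\times\bigl(Y\backslash\phi(X)\bigr)$.

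Positivity and symmetry are immediate, since the defining formula is symmetric in its arguments and $d$ is itself symmetric and nonnegative. For the identity of indiscernibles in the metric case I would observe that $d'(y,y')=0$ forces either $y=y'\notin\phi(X)$, or $y,y'\in\phi(X)$ with $d\bigl(\phi^{-1}(y),\phi^{-1}(y')\bigr)=0$; in the latter situation bijectivity of $\phi^{-1}$ on $\phi(X)$ together with the identity of indiscernibles for $d$ gives $y=y'$. In the pseudometric case only the forward implication $y=y'\Rightarrow d'(y,y')=0$ is required, and it holds by construction.

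The main work, and the only nonroutine step, is the triangle inequality $d'(y,y'')\le d'(y,y')+d'(y',y'')$, which I would organize around the observation that $d'(a,b)<\infty$ holds exactly when $a$ and $b$ lie in the same block of the partition of $Y$ into $\phi(X)$ and the singletons $\{y\}$ with $y\notin\phi(X)$, and that lying in the same block is an equivalence relation. Concretely, if $d'(y,y'')=\infty$ then $y$ and $y''$ lie in different blocks, so the intermediate point $y'$ must differ in block from at least one of them; that summand is then $\infty$ and the inequality holds trivially (using $\infty+t=\infty$ for $t\in\bar{\mathbb{R}}_+$). If instead $d'(y,y'')<\infty$, then either $y=y''$, where the inequality reduces to $0\le d'(y,y')+d'(y',y'')$, or $y,y''\in\phi(X)$: when $y'\in\phi(X)$ as well the inequality is precisely the triangle inequality for $d$ transported through $\phi^{-1}$, and when $y'\notin\phi(X)$ both summands equal $\infty$ so the inequality is vacuous. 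These configurations are exhaustive, so $\bigl(Y,d'\bigr)$ is an extended (pseudo)metric space with the required restrictions.
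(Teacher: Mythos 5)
Your construction is essentially the paper's own: transport $d$ through $\phi^{-1}$ on $\phi(X)\times\phi(X)$, put distinct points outside $\phi(X)$ at distance $\infty$ and each such point at distance $0$ from itself, and make all cross-distances $\infty$. The paper merely writes down this definition and asserts that the axioms hold (it never even explicitly defines the cross-case, which your formula handles cleanly), so your verification of positivity, symmetry, the identity of indiscernibles in the metric case, and the two restriction claims is a strict improvement in completeness, and all of those steps are correct.

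The one genuine flaw is in the triangle inequality, and it comes from the hypothesis ``(extended)'': the theorem permits $d$ itself to take the value $\infty$. Your case analysis rests on the claim that $d'(a,b)<\infty$ holds \emph{exactly} when $a$ and $b$ lie in the same block of the partition of $Y$ into $\phi(X)$ and the singletons outside it. When $d$ is extended, the ``only if'' direction fails: two points of $\phi(X)$ can be at distance $\infty$. The configuration $y,y''\in\phi(X)$ with $d\bigl(\phi^{-1}(y),\phi^{-1}(y'')\bigr)=\infty$ then escapes both of your cases --- it contradicts the ``different blocks'' conclusion you draw when $d'(y,y'')=\infty$, and it is excluded by assumption from the case $d'(y,y'')<\infty$. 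The patch is immediate and uses only tools you already deploy: organize the cases by block membership instead of by finiteness. If $y,y',y''\in\phi(X)$, the inequality is the (extended) triangle inequality of $d$ transported through $\phi^{-1}$, valid whether or not the values are finite; if $y,y''\in\phi(X)$ and $y'\notin\phi(X)$, both summands are $\infty$; if $y$ and $y''$ lie in different blocks, then $y'$ differs in block from at least one of them, making that summand $\infty$; and if $y=y''\notin\phi(X)$, the left-hand side is $0$. With that reorganization your proof covers all four readings of ``(extended) (pseudo-)metric'' and is complete.
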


\begin{proof}
We define \({d}^\prime:{Y}\times{Y}\rightarrow\bar{\mathbb{R}}_+\) where \({Y}\times{Y}\ni\left({x},\,{y}\right)\rightarrow{d}^\prime\left({x},\,{y}\right)=\left(1-\delta_{{x},\,{y}}\right){d}^\prime\left({y},\,{x}\right)\), where \(\delta\) is the Kronecker delta, such that \({X}\times{X}\ni\left({x},\,{x}\right)\rightarrow{d}\left({x},\,{y}\right)={d}^\prime\left(\phi\left({x}\right),\,\phi\left({y}\right)\right)\) and \(\left({Y}\backslash\phi\left({X}\right)\right)\times\left({Y}\backslash\phi\left({X}\right)\right)\ni\left({x},\,{y}\right)\rightarrow{d}^\prime\left({x},\,{y}\right)=
\begin{cases}
0 & {x}={y} \\
\infty & {x}\neq{y}
\end{cases}
\) and so we  have that \(\left({Y},\,{d}^\prime\right)\) is an extended (pseudo-)metric space and that the indiscernible pairs of distinct points in \({Y}\) with respect to \({d}^\prime\) are exactly the images of the indiscernible pairs of distinct points in \({X}\) with respect to \({d}\). 
\qed \end{proof}

Recall that a (pseudo)metric is a map from the pairs of points of a space $X$ to the reals satisfying some axioms. An extended (pseudo)metric is a map from the pairs of points of a space $X$ to the extended reals, $\overline{\mathbb{R}}$ (thus allowing $\infty$) satisfying the same respective axioms.

This may appear counterintuitive, since we would like to process our unbounded alphabet, but if $X$ were unbounded so would be $Y$. Let $\Sigma$ be a finite alphabet. If we have an injection $\psi:\Sigma \rightarrow X$, but not the other way around, there always exists some power $n$ of $\Sigma$ such that there is an injection $\phi:X\rightarrow\Sigma^n$, where $n$ is unbounded. Hence, there exists an injection $\phi: X\rightarrow \Sigma^*$. 

Note that a bijection is not required, which is beneficial since we may not always find one. For example, even though all messages sent over a network can be represented as strings of bytes, their representation is in general not bijective due to some redundancies such as the CRC. In case a bijection $\phi$ does however exist, it is called a translation and $d^\prime$ is the translated metric.

\begin{definition}[Translation]
If \(\phi\) is bijective we say that the (approximate) string cover problem over \(\Gamma^*\) with respect to \(\delta^\prime\) is a translation of the (approximate) string cover problem over \(\Sigma^*\) with respect to \(\delta\). 
\end{definition}

An injection $\phi:X\rightarrow Y$ naturally lifts to $\phi:X^*\rightarrow Y^*$ over strings, but not necesarily to an injection. This is particularily important when $Y=\Sigma^*$ are strings themselves. In many comunication protocols a problem is splitting the flux into telegrams. If all telegrams have the same size i.e. $Y=\Sigma^n$ such a split is easy to do and the lifted $\phi:X^*\rightarrow\left(\Sigma^n\right)^*$ is naturally injective. However, when the telegrams have variable length we have to induce some additional structure, such as a $wake-up$ bit (like in MARK/SPACE serial protocols), a $terminal$ character (like with C strings) or start a string with its length (like with Pascal strings and \texttt{std::string}).



\begin{definition}[Block Variations]
If $Y=\Sigma^n$ for some $n$ or $Y=\Sigma^*\bullet$ such that $\lvert Y \rvert>\lvert X\rvert>\lvert\Sigma\rvert,\,\bullet\notin\Sigma$ we say that the (approximate) string cover problem over $Y^*$ with respect to $d^\prime$ is a block variation of the (approximate) string cover problem over $X^*$ with respect to $d$. 
\end{definition}

\begin{theorem}
\label{thm:block-complexity}
The block variation of an ACP/SCP has the same complexity as the original modulo $\mathcal{O}\left(Nf(1)+g(N)\right)$ where $N$ is the length of the input, $f$ is the complexity of $\phi:X^*\rightarrow Y^*$ and $g$ the complexity of \(\phi^{-1}:Y^*\rightarrow X^*\)
\end{theorem}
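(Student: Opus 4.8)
The plan is to exhibit a cover- and distance-preserving bijection between the finite-distance feasible solutions of the two problems, and then to read off the complexity claim from a two-way reduction whose only overhead is the cost of translating between the two representations. Throughout I would write $\phi$ also for the letter-wise lift $\phi:X^*\to Y^*$, so that $\phi(u)_i=\phi(u_i)$, and I would treat both the exact (SCP) and approximate (ACP) versions uniformly, the former being the special case in which the optimal distance is $0$.

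First I would establish the structural correspondence between tilings. Since $\phi$ is an injection, for any candidate cover $s$ we have $s_j=s_k\Leftrightarrow\phi(s)_j=\phi(s)_k$, so the legal overlaps coincide, $\Delta_s=\Delta_{\phi(s)}$. By the characterization of $\Sigma_\mathcal{I}$ through $\Delta_s$ it follows that a tiling $\mathcal{I}\in\mathcal{T}_{\lvert w\rvert}$ is valid for $s$ exactly when it is valid for $\phi(s)$, and that $\phi$ commutes with tiling, $\mathcal{I}(\phi(s))=\phi(\mathcal{I}(s))$. Hence the feasible pairs $(s,\mathcal{I})$ of the original problem map, via $s\mapsto\phi(s)$ with $\mathcal{I}$ unchanged, onto feasible pairs of the block variation.

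Next I would verify that this bijection preserves the objective. Since the string metric on each side is the additive extension of the per-letter (pseudo)metric, and since by Theorem~\ref{thm:induced-metric_art} the induced metric satisfies $d'(\phi(x),\phi(y))=d(x,y)$ on $\phi(X)$, I obtain $d'(\phi(w),\mathcal{I}(\phi(s)))=d'(\phi(w),\phi(\mathcal{I}(s)))=d(w,\mathcal{I}(s))$ for every feasible pair, so optima correspond to optima. The point that makes the correspondence tight in the \emph{reverse} direction is the $\infty$-penalty of Theorem~\ref{thm:induced-metric_art}: any block-variation solution whose tiled string has finite distance to $\phi(w)$ must, position by position, use only letters of $\phi(X)$, because $d'(\phi(x),y)=\infty$ whenever $y\notin\phi(X)$. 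As every letter of the cover appears in its tiled string, the optimal cover is forced to be $\phi(s)$ for some $s\in X^*$, so $\phi^{-1}$ is well defined on it and returns the optimal cover of the original instance.

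Finally I would account for the complexity. To solve a block-variation instance I de-block the input $W\in Y^*$ to $\phi^{-1}(W)\in X^*$ in time $O(g(N))$, run the original algorithm, and lift the (short) returned cover back letter by letter in time dominated by $O(Nf(1))$; the symmetric reduction translates $w\in X^*$ to $\phi(w)$ in $O(Nf(1))$ and converts the output cover back within $O(g(N))$. In both directions the tiling itself is representation-independent and need not be touched, so the total overhead is $O(Nf(1)+g(N))$, as claimed. I expect the main obstacle to lie in the variable-length case $Y=\Sigma^*\bullet$: there one must confirm that the letter-wise lift $\phi:X^*\to Y^*$ stays injective and that de-blocking is unambiguous, which is precisely what the terminal marker $\bullet$ (or a length prefix) secures, and which is the reason the backward conversion incurs the whole-string cost $g(N)$ rather than a per-letter $g(1)$.
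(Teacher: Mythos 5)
Your proposal takes essentially the same route as the paper's own (very terse) proof: a two-way reduction in which the only overhead is translating the input and the returned cover via $\phi$ and $\phi^{-1}$, costing $\mathcal{O}\left(Nf(1)\right)$ in one direction and $\mathcal{O}\left(g(N)\right)$ in the other, and your tiling-correspondence and $\infty$-penalty arguments merely make explicit the solution-preservation that the paper leaves implicit. The only point the paper records that you omit is the degenerate case of a block-variation input $W\notin\phi\left(X^*\right)$, where $\phi^{-1}(W)$ is undefined: the paper notes that there the ACP admits no finite-distance solution and the SCP collapses to the plain exact-cover question on $W$ itself, so your backward reduction should detect this failure of inversion (which the $\mathcal{O}\left(g(N)\right)$ decoding step can do) and fall back accordingly; this does not change the claimed overhead.
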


\begin{remark}
Using G\"odel's encoding ($\phi\left(w\right)=\underset{i=1}{\overset{\lvert w\rvert}\prod}p_i^{w_i}$ in base 1 where $p_i$ is the $i^{th}$ prime number) we can represent a string using a single character. We cannot however lift the telegrams produced this way using a Pascal encoding, and we require one of the other two. Thus, block variations only make sense over alphabets that are at least binary.
\end{remark}

\begin{theorem}
\label{thm:new-metric}
For each fixed ${p}\geq2$, there is a metric $d^\prime$ on ${\left(\mathbb{Z}_p\right)}^*$ induced by a metric $d$ on the naturals, and thus by any restriction of that metric to a $\mathbb{Z}_n$, obtained by an injection for which both it and its inverse are computable in linear time with regard to the input size and logarithmic time with regard to $p$.
\end{theorem}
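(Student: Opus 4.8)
The plan is to take $\phi$ to be the base-$p$ numeral map and let Theorem~\ref{thm:induced-metric_art} do the metric-theoretic work. For a fixed $p\geq 2$, I would define $\phi:\mathbb{N}\rightarrow\left(\mathbb{Z}_p\right)^*$, with $\mathbb{Z}_p=\overline{0,\,p-1}$, by sending a natural number to its canonical base-$p$ expansion, i.e. the unique digit string with no leading zero (and $0\mapsto\varepsilon$). This is an injection for every $p\geq 2$; if one prefers a genuine bijection one may instead relabel the bijective base-$p$ numeration onto $\overline{0,\,p-1}$, which removes the leading-zero bookkeeping entirely. For the source metric any $d$ on $\mathbb{N}$ suffices, the natural default being the discrete metric $d\left(x,\,y\right)=1-\delta_{x,\,y}$ that underlies the Hamming distance; the construction uses nothing about $d$ beyond the metric axioms.

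Given $\phi$ and $d$, the metric $d^\prime$ is produced by a direct application of Theorem~\ref{thm:induced-metric_art} with $X=\mathbb{N}$ and $Y=\left(\mathbb{Z}_p\right)^*$, yielding an extended (pseudo-)metric on $\left(\mathbb{Z}_p\right)^*$ that agrees with $d$ on $\phi\left(\mathbb{N}\right)\times\phi\left(\mathbb{N}\right)$ and equals $\infty$ from image strings to the non-image strings (those carrying a leading zero). The clause ``and thus by any restriction to a $\mathbb{Z}_n$'' then costs nothing: the restriction of $d$ to $\overline{0,\,n-1}$ is still a metric and $\phi$ restricted to that set is still injective into $\left(\mathbb{Z}_p\right)^*$, so the same theorem delivers the induced metric in the finite-alphabet case.

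The remaining and principal work is the running-time bound. For the forward map I would emit the $L=\Theta\left(\log_p n\right)$ digits by repeated division ($n\mapsto n\bmod p$, $n\mapsto\lfloor n/p\rfloor$), and for the inverse I would re-evaluate the string by Horner's rule; each step handles a single $\mathbb{Z}_p$-symbol, whose read, write, and unit arithmetic cost $\mathcal{O}\left(\log p\right)$ bit operations. I expect the delicate point to be exactly this accounting rather than any structural difficulty: to obtain genuine linearity in the input size one must fix the cost model so that a $\mathbb{Z}_p$-symbol is an $\mathcal{O}\left(\log p\right)$-bit object and, crucially, avoid a cross-base conversion---which is super-linear in general---by taking the natural number to be presented in base $p$ already, so that $\phi$ is a canonicalization rather than a conversion. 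Under this reading both $\phi$ and $\phi^{-1}$ run in time linear in the length of the string and logarithmic in $p$, with membership in $\phi\left(\mathbb{N}\right)$ (the leading-digit test) decided inside the same budget, which is what the statement asserts.
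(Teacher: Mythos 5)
Your reading of the statement in isolation is defensible, but it misses the one idea that makes this theorem non-trivial, and the injection you propose is precisely the one the paper has to avoid. The metric $d^\prime$ constructed here is not an end in itself: it exists to feed the block-variation machinery (Theorem~\ref{thm:block-complexity} and the corollary immediately following this theorem), where the per-letter injection $\phi$ must lift to an injection on whole strings --- each letter of the unbounded alphabet is replaced by its codeword, and the resulting concatenation must be uniquely parseable back, since Theorem~\ref{thm:block-complexity} charges $g(N)$ for computing $\phi^{-1}:Y^*\rightarrow X^*$ on such concatenations. Plain base-$p$ positional notation fails exactly this: over $p=10$ the string $123$ decodes as $(1)(2)(3)$, $(12)(3)$, $(1)(23)$ or $(123)$, so the lifted map is not injective, the inverse $g$ does not exist, and the corollary that ACP/SCP over an unbounded alphabet reduces to its block variation over a fixed $\mathbb{Z}_p$ would collapse. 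The paper flags this issue explicitly (``An injection $\phi:X\rightarrow Y$ naturally lifts to $\phi:X^*\rightarrow Y^*$ over strings, but not necessarily to an injection''), and its definition of block variations demands codeword sets of the form $\Sigma^n$ or $\Sigma^*\bullet$ for this very reason. Switching to bijective numeration, as you suggest, does not help: it removes the leading-zero ambiguity inside one codeword, not the ambiguity of where one codeword ends and the next begins.

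This is why the paper's own proof is not a one-line appeal to Theorem~\ref{thm:induced-metric_art}: it builds a self-delimiting, variable-length code $\psi$ in the ``Pascal string'' spirit. It reserves the symbol $p-1$ as a separator, writes the payload in base $p-2$ via $\tau_{p-2}$, and prefixes it with a recursively encoded length (a chain of length-of-length segments terminating in a bounded base case), so that a decoder scanning for separators can verify the length chain and recover both the value and the codeword boundary in time linear in the codeword length and logarithmic in $p$. Your cost-model discussion (repeated division, Horner evaluation, $\mathcal{O}\left(\log p\right)$ per symbol) is sound as far as it goes and would transfer to the paper's $\psi$, but as written your proposal establishes only the literal, weakened reading of the statement and cannot support the use the paper makes of it. To repair it, replace positional notation by a prefix/self-delimiting code (the paper's length-chained one, or an Elias-style code) and then redo the same complexity accounting for that code.
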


\begin{corollary}
For any metric, the complexity of the ACP/SCP on an unbounded size alphabet is the same as the complexity of its block variation induced by such a function as the one in the theorem above on a finite size alphabet, using extra $\mathcal{O}\left(N\log p\right)$ time.
\end{corollary}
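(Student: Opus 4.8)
The plan is to read the Corollary as a direct composition of the two preceding theorems: Theorem~\ref{thm:new-metric} manufactures a concrete, metric-agnostic injection $\phi:\mathbb{N}\to(\mathbb{Z}_p)^*$ together with its induced metric $d^\prime$, and Theorem~\ref{thm:block-complexity} converts the cost of running $\phi$ and $\phi^{-1}$ into the overhead of switching between the two representations. First I would fix an arbitrary metric $d$ on the unbounded alphabet (identified with $\mathbb{N}$, or with a restriction to some $\mathbb{Z}_n$). Theorem~\ref{thm:new-metric}, which in turn rests on the induced-metric construction of Theorem~\ref{thm:induced-metric_art}, gives an injection $\phi$ and a metric $d^\prime$ on $(\mathbb{Z}_p)^*$ with $d^\prime_{|\phi(X)\times\phi(X)}=d$, so that the ACP/SCP instance over the unbounded alphabet with metric $d$ is carried faithfully to an instance over $\mathbb{Z}_p$ with metric $d^\prime$.

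Next I would check that this pairing $(\phi,d^\prime)$ genuinely instantiates a block variation in the sense of the earlier definition, so that Theorem~\ref{thm:block-complexity} is applicable. The one point that needs care is injectivity of the lift $\phi:X^*\to Y^*$: since the base-$p$ encodings of naturals have variable length, the codewords do not sit inside a single $\Sigma^n$, and we must take $Y=\Sigma^*\bullet$ (the terminal-symbol form) with $\Sigma=\mathbb{Z}_p$ and $\bullet\notin\mathbb{Z}_p$. With the terminal marker the concatenation of encoded characters is uniquely parseable, so the lift is injective and $(\mathbb{Z}_p^*\bullet)^*$ with $d^\prime$ is a legitimate block variation of the original problem.

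Finally I would read off the complexity constants from Theorem~\ref{thm:new-metric} and substitute them into the overhead formula $\mathcal{O}(Nf(1)+g(N))$ of Theorem~\ref{thm:block-complexity}. Encoding a single character costs $f(1)=\mathcal{O}(\log p)$ (one base-$p$ conversion, whose per-digit arithmetic is logarithmic in $p$), so encoding the whole length-$N$ input is $\mathcal{O}(N\log p)$; decoding the output cover, a $Y$-string of length $\mathcal{O}(N)$, costs $g(N)=\mathcal{O}(N\log p)$ for the same reason. Hence the total overhead is $\mathcal{O}(Nf(1)+g(N))=\mathcal{O}(N\log p)$, which is exactly the claim. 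The Corollary therefore carries essentially no mathematical surprise beyond the two theorems it invokes; the only genuine obstacles are the bookkeeping ones above---verifying that Theorem~\ref{thm:new-metric}'s ``linear in input size, logarithmic in $p$'' bounds reassemble into the per-character ($Nf(1)$) versus whole-string ($g(N)$) accounting demanded by Theorem~\ref{thm:block-complexity}, and confirming that the variable-length encoding still yields an injective lift.
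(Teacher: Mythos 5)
Your proof is correct and coincides with the paper's (implicit) argument: the corollary is obtained precisely by composing Theorem~\ref{thm:new-metric} with Theorem~\ref{thm:block-complexity} and substituting the per-character encoding cost $f(1)=\mathcal{O}\left(\log p\right)$ and the decoding cost $g(N)=\mathcal{O}\left(N\log p\right)$ into the overhead bound $\mathcal{O}\left(Nf(1)+g(N)\right)$. One minor remark: the encoding actually constructed in Theorem~\ref{thm:new-metric} is a recursive length-prefixed code with internal $(p-1)$ separators (self-delimiting, but not literally of the form $\Sigma^{n}$ or $\Sigma^*\bullet$), so your switch to a terminal-marker code is a legitimate, and arguably cleaner, way to make the block-variation definition apply verbatim while preserving all the stated complexity bounds.
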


\bibliographystyle{bababbrv}
\bibliography{bibliography}

\newpage

\appendix

\section{Omitted proofs}

\begin{proof}[of Theorem~\ref{thm:block-complexity}]
Let O be the complexity of the original problem and BV the complexity of its block variation.

Let \({w}\in\Sigma^*\). After \(\mathcal{O}\left({f(N}\right)=\mathcal{O}\left({Nf(1)}\right)\) work we can solve the block variation and hence \({O}\leq{BV}+\mathcal{O}\left({Nf(1)}\right)\).

Let \({w}\in\Gamma^*\). After \(\mathcal{O}\left({g(N}\right)\) work we can solve the original and hence \({BV}\leq{O}+\mathcal{O}\left({g(N)}\right)\). Note that if there is no inverse there is no solution to the ACP and the only solution to the SCP is the string itself.
\qed \end{proof}

\begin{proof}[of Theorem~\ref{thm:new-metric}]
We are now on the lookout for a \(\psi\) that is more "variable-size" in nature. The most natural example is defined recursively. Let \(\tau_{p-2}:\mathbb{N}\rightarrow{\left(\mathbb{Z}_{p-2}\right)}^*\) be the base-(p-2) conversion i.e. \(\lvert\tau_{p-2}\left({x}\right)\rvert=\lceil\log_{p-2}\left({x+1}\right)\rceil\) with \(\tau_{p-2}\left({x}\right)_{\lceil\log_{p-2}\left({x+1}\right)\rceil+1-{i}}=\left({x}-\underset{{j}=1}{\overset{{i}-1}\sum}{\tau_{p-2}\left({x}\right)}_{\lceil\log_{p-2}\left({x+1}\right)\rceil+1-{j}}\ \textmd{mod}\ \left({p-2}\right)\right)\ \forall{i}\in\overline{1,\,\lceil\log_{p-2}\left({x+1}\right)\rceil}\) and consider the much less trivial: \newline\(\psi({x})=\begin{cases}
0\left({p}-1\right)\tau_{p-2}\left({n}\right) & \lceil\log_{p-2}\left({x+1}\right)\rceil\leq{p-3} \\
\psi\left(\lceil\log_{p-2}\left({x+1}\right)\rceil\right)\left({p}-1\right)\tau_{p-2}\left({n}\right) & \lceil\log_{p-2}\left({x+1}\right)\rceil>{p-3}
\end{cases}\)

Note that this satisfies \(\lvert\psi\left({x}\right)\rvert\leq\lvert\psi\left(\lceil\log_{p-2}\left({x+1}\right)\rceil\right)\rvert+\lceil\log_{p-2}\left({x+1}\right)\rceil\leq{x}+1\Rightarrow\lvert\psi\left({x}\right)\rvert\leq2\lceil\log_{p-2}\left({x+1}\right)\rceil\Rightarrow\lvert\psi\left({x}\right)\rvert\in\mathcal{O}\left(\log_{p}{n}\right)\) and thus this is also a logarithmic-time logarithmic deflation. What remains now is to give a way to compute \({d}^{\prime\prime}\) in \(\mathcal{O}\left({N}\log_{p}{n}\right)\), or, equivalently, a way to compute \(\phi^{-1}:{\left(\mathbb{Z}_{p-1}\right)}^*\rightarrow\mathbb{N}\) in \(\mathcal{O}\left(\log_{p}{n}\right)\). Let \(x_1<x_2<\dots<x_{k}\) be the set of apparitions of \(\left({p}-1\right)\) in \({w}\in{\left(\mathbb{Z}_{p-1}\right)}^*\). \(\psi^{-1}\left({w}\right)\) exists if and only if \(x_1>1= x_0+1,\,x_k<\lvert{w}\rvert= x_{k+1}-1\) and \(\forall{i}\in\overline{1,\,{k}}\ \tau_{p-2}^{-1}\left({w}_{x_{i-1}+1}\dots{w}_{x_{i}-1}\right)=\lceil\log_{p}\left(1+\tau_{p-2}^{-1}\left({w}_{x_{i}+1}\dots{w}_{x_{i+1}-1}\right)\right)\rceil\), in which case \(\psi^{-1}\left({w}\right)=\tau_{p-2}^{-1}\left({w}_{x_{k}+1}\dots{w}_{x_{k+1}-1}\right)\) so this can  be checked in logarithmic time.
\qed \end{proof}

\end{document}